%% file: main.tex
\newif\ifRAL
\RALfalse 	

\newif\ifTRO
\TROtrue 	

\ifRAL
    \documentclass[letterpaper, 10 pt, journal, twoside]{IEEEtran}
\else
    \ifTRO
        \documentclass[lettersize,journal]{IEEEtran}
    \else
        \documentclass[letterpaper, 10 pt, conference]{IEEEconf}
    \fi
\fi

\IEEEoverridecommandlockouts

\ifRAL
\else
    \ifTRO
    \else
    \fi
\fi

\usepackage{graphicx} 

\usepackage[english]{babel}

\usepackage{amsmath} 
\usepackage{amssymb}  
\usepackage{amsthm}
\usepackage{amsfonts}
\usepackage{mathtools}
\usepackage{verbatim}
\usepackage[normalem]{ulem} 

\usepackage{color}
\usepackage[usenames,dvipsnames,table]{xcolor}
\usepackage{todonotes}
\usepackage{cite}
\usepackage{caption}
\usepackage{subcaption}  
\usepackage{cancel} 

\usepackage{comment}

\usepackage{booktabs}       

\usepackage{bm}
\providecommand{\bm}{\pmb}

\usepackage{tikz}

\usepackage[acronym]{glossaries}

\input{symbol_definition}

\title{STL-GCS: A Planner-Controller Framework for Signal Temporal Logic via Graphs of Time-varying Convex Sets}
\author{Nicola De Carli, Gregorio Marchesini, Dimos V. Dimarogonas}

\begin{document}

\markboth{Journal of \LaTeX\ Class Files,~Vol.~18, No.~9, September~2020}%
{How to Use the IEEEtran \LaTeX \ Templates}
\maketitle

\begin{abstract}
\input{Sections/abstract}
\end{abstract}

\ifRAL
    \begin{IEEEkeywords}
    Signal Temporal Logic, Motion Planning, Nonlinear Control.
    \end{IEEEkeywords}
\else
    \ifTRO
        \begin{IEEEkeywords}
        Signal Temporal Logic, Motion Planning, Nonlinear Control.
        \end{IEEEkeywords}
    \else
    \fi
\fi


\section{Introduction}

\input{Sections/intro}

\input{Sections/sys_model}

\input{Sections/stl_intro}

\input{Sections/stl_tv_sets}

\input{Sections/stl_gcs}

\input{Sections/stl_control}

\input{Sections/exp_results}

\input{Sections/conclusions}

\appendix

\input{Sections/app_selectingkappa}

\input{Sections/app_overlapping}
\bibliographystyle{unsrt}
\bibliography{biblio}

\end{document}

%% file: symbol_definition.tex
\DeclareMathOperator{\convhull}{co}

\newcommand{\nR}[1]{\mathbb{R}^{#1}}		
\newcommand{\Rpluseq}{\mathbb{R}_{\geq0}}		
\newcommand{\nN}[1]{\mathbb{N}^{#1}}		

\newcommand{\vect}[1]{\ensuremath{\bm{#1}}}		
\newcommand{\matr}[1]{\ensuremath{\bm{#1}}}		

\newcommand{\define}{\coloneqq}			
\newcommand{\norm}[1]{\left\lVert#1\right\rVert}

\newcommand{\interval}[2]{ \left[ #1, #2 \right] }
\newcommand{\rangen}[1]{ \ensuremath{\left[ #1 \right]} }

\newcommand{\classK}{\mathcal{K}}

\newcommand{\union}{\cup}
\newcommand{\bigunion}{\bigcup}

\newcommand{\fig}{Fig.~}	

\newcommand{\zeros}[1]{\matr{0}_{#1}}
\newcommand{\ones}[1]{\matr{1}_{#1}}

\newcommand{\pos}{\vect{p}}				
\newcommand{\rotMat}{\matr{R}}				

\newcommand{\state}{\boldsymbol{x}}
\newcommand{\sysinput}{\boldsymbol{u}}
\newcommand{\dstate}{\dot{\boldsymbol{x}}}

\newcommand{\error}{\boldsymbol{e}}

\newcommand{\planState}{\boldsymbol{z}}

\newcommand{\feedlinT}{\boldsymbol{\Pi}}

\newcommand{\planInput}{\boldsymbol{v}}

\newcommand{\config}{\boldsymbol{q}}

\newcommand{\ddconfig}{\ddot{\boldsymbol{q}}}

\newcommand{\configSet}{\mathcal{Q}}

\newcommand{\pathvar}{\ensuremath{s}}

\newcommand{\timelaw}{\ensuremath{p}}
\newcommand{\spacelaw}{\ensuremath{\rr}}

\newcommand{\stateSet}{\mathcal{X}}
\newcommand{\inputSet}{\mathcal{U}}

\newcommand{\spatiotemp}{\boldsymbol{\xi}}

\newcommand{\graph}{\ensuremath{\mathcal{G}}}
\newcommand{\edges}{\ensuremath{\mathcal{E}}}
\newcommand{\vertices}{\ensuremath{\mathcal{V}}}
\newcommand{\vertexv}{\ensuremath{v}}

\newcommand{\true}{\top}
\newcommand{\false}{\bot}
\newcommand{\pred}[1]{\mu^{#1}}
\newcommand{\formulaNT}{\varphi}
\newcommand{\formulaT}{\phi}
\newcommand{\formulaOR}{\psi}
\newcommand{\tempop}{\mathsf{T}}

\newcommand{\always}{\mathsf{G}}
\newcommand{\eventually}{\mathsf{F}}
\newcommand{\untilab}[2]{\mathsf{U}_{[{#1},{#2}]}}
\newcommand{\alwaysab}[2]{\mathsf{G}_{[{#1},{#2}]}}
\newcommand{\eventuallyab}[2]{\mathsf{F}_{[{#1},{#2}]}}
\newcommand{\tempopab}[2]{\mathsf{T}_{[{#1},{#2}]}}

\newcommand{\stlsatisfies}{\models}

\newcommand{\stldef}{\mathrel{::=}}
\newcommand{\stlsep}{\;\mid\;}

\newcommand{\robustness}[1]{\ensuremath{\rho}^{#1}}

\newcommand{\predFun}{\ensuremath{h}}

\newcommand{\tvset}{\ensuremath{\mathcal{B}}}
\newcommand{\tvsetspatiotemp}{\ensuremath{\bar{\mathcal{B}}}}
\newcommand{\predset}{\ensuremath{\mathcal{H}}}
\newcommand{\robpredset}{\ensuremath{\predset^r}}

\newcommand{\stlcbf}[1]{\ensuremath{\mathfrak{b}^{#1}}}

\newcommand{\robustMargin}{r}

\newcommand{\alwaystxt}{\emph{always}}
\newcommand{\eventuallytxt}{\emph{eventually}}
\newcommand{\untiltxt}{\emph{until}}

\newcommand{\cc}{\vect{c}}
\newcommand{\dd}{\vect{d}}
\newcommand{\ff}{\vect{f}}
\newcommand{\pp}{\vect{p}}
\newcommand{\rr}{\vect{r}}

\newcommand{\cC}{\vect{C}}
\newcommand{\gG}{\vect{G}}

\newcommand{\bgamma}{\vect{\gamma}}

\newtheorem{theorem}{Theorem}
\newtheorem{remark}[theorem]{Remark}
\newtheorem{proposition}[theorem]{Proposition}

\newtheorem{definition}[theorem]{Definition}
\newtheorem{example}{Example}

\newcommand{\xdim}{\ensuremath{n_x}}
\newcommand{\udim}{\ensuremath{n_u}}
\newcommand{\flatoutdegree}{\ensuremath{p}}

\newcommand{\planmodeldim}{\ensuremath{n_z}}
\newcommand{\configdim}{\ensuremath{d}}
\newcommand{\reldeg}{\varrho}

\newcommand{\numconsh}{\ensuremath{n_{h}}}   

\newcommand{\branchidx}{i}
\newcommand{\branchnum}{I}

\newcommand{\obsReg}{\mathcal{O}}
\newcommand{\freespace}{\mathcal{F}}
\newcommand{\colfreeReg}[1]{\mathcal{F}_{#1}}
\newcommand{\spcolfreeReg}[1]{\bar{\mathcal{F}}_{#1}}
\newcommand{\colfreeNum}{F}
\newcommand{\colfreeID}{\ensuremath{f}}

\newcommand{\freeregID}{\ell}
\newcommand{\freeregnum}{n_f}

\newcommand{\obsfun}[1]{o_{#1}}
\newcommand{\obsnum}{n_o}
\newcommand{\obsID}{l}
\newcommand{\obscons}[1]{n_{c,#1}}
\newcommand{\obsconsID}{i}

\newcommand{\tinit}{t_0}
\newcommand{\tfinal}{t_f}
\newcommand{\tinittask}[1]{t_{0#1}}
\newcommand{\tfinaltask}[1]{t_{f#1}}

\newcommand{\totallenght}{L}

\newcommand{\taskID}{\ensuremath{k}}
\newcommand{\tasknum}{\ensuremath{K}}

\newcommand{\funnelset}[1]{\tvsetspatiotemp^F_{#1}}
\newcommand{\predicateset}[1]{\tvsetspatiotemp^P_{#1}}
\newcommand{\funnelsetslice}[1]{\tvset^F_{#1}}

\newcommand{\lintermtv}{\ensuremath{\kappa}}

\newcommand{\bspline}{B-spline}
\newcommand{\numctrlpoints}{\ensuremath{J}}
\newcommand{\ctrlpointidx}{\ensuremath{j}}
\newcommand{\bcurve}{\ensuremath{\bgamma}}

\newcommand{\sourceV}{\ensuremath{\vertexv_s}}
\newcommand{\targetV}{\ensuremath{\vertexv_t}}

\newcommand{\path}{\mathfrak{p}}
\newcommand{\pathsSet}{\mathcal{P}}
\newcommand{\vertexvar}[1]{\mathbf{x}_{#1}}
\newcommand{\vertexset}[1]{\mathcal{X}_{#1}}

\newcommand{\taskset}{\ensuremath{{\mathcal{B}}}}
\newcommand{\sttaskset}{\ensuremath{\bar{\mathcal{B}}}}
\newcommand{\gcsset}{\ensuremath{\bar{\mathcal{S}}}}

\newcommand{\redG}{\bar{\boldsymbol{G}}}
\newcommand{\yaw}{\psi}
\newcommand{\rotz}{\rotMat_z}
\newcommand{\inertiaz}{\ensuremath{I_z}}

\newcommand{\force}{\boldsymbol{f}_u}
\newcommand{\torquez}{\tau_z}
\newcommand{\thrustersforces}{\boldsymbol{w}}
\newcommand{\allocmat}{\boldsymbol{B}_u}

\newcommand{\ND}[1]{{\color{magenta}#1\textsuperscript{NDC}}}

\newcommand{\greg}[1]{\textcolor{blue}{#1}}
\newcommand{\gregterminator}[1]{\textcolor{red}{\sout{#1}}}

\newcommand{\todoing}[1]{\todo[inline,color=blue!20, linecolor=orange!250]{\small#1}}

\newcommand{\gregsay}[1]{\todoing{\textbf{Greg98:} #1}}

%% file: Sections/abstract.tex
We present a unified trajectory planning and control framework for the satisfaction of Signal Temporal Logic (STL) specifications defined over convex predicates. At the planning layer, STL tasks are encoded as time-varying convex sets in configuration space, specifically designed so that forward invariance of the system with respect to these sets implies satisfaction of the specification with a prescribed robustness margin. This representation is then lifted to the joint time--configuration space and combined with the Graphs of Convex Sets (GCS) framework, yielding a shortest-path formulation of the planning problem over convex spatio-temporal sets. Trajectories are parameterized by B-splines, which enable continuous-time enforcement of STL satisfaction, collision avoidance, and smoothness constraints. At the control layer, the same time-varying sets used for planning are exploited to design a feedback controller that tracks the planned trajectory while prioritizing satisfaction of the STL specification during execution in the presence of tracking errors and model mismatch. We validate the proposed approach in simulation and in real-world experiments on space robotic platforms.

%% file: Sections/intro.tex
Robotic and autonomous systems are increasingly expected to accomplish complex missions while operating in cluttered environments and under precise time and space constraints\cite{lindemann2025formal,plaku2015motion, kress2018synthesis, silano2021power}. Examples include reaching a region within a prescribed time window, repeatedly visiting a charging station for a given duration, always staying in a safe region, or executing one among several admissible tasks.  
A substantial body of work has addressed motion planning under temporal specifications expressed in Linear Temporal Logic (LTL), which is well suited to encoding rich sequencing and persistence requirements \cite{kress2009temporal, belta2017formal}. More recently, Signal Temporal Logic (STL) has emerged as a particularly attractive formalism for robotic task specification. Unlike other temporal logics, STL combines logical formulas with continuous-time temporal operators, thereby enabling the specification of spatio-temporal requirements over continuous-time systems, and is equipped with quantitative semantics that measure how robustly a trajectory satisfies a given specification \cite{maler2004monitoring, belta2019formal, lindemann2025formal}.
These features make STL especially appealing in robotics, where robustness margins provide a natural way to reason about modeling uncertainty, disturbances, and tracking errors. However, synthesizing dynamically feasible trajectories that satisfy STL specifications remains highly challenging \cite{belta2019formal}.

Trajectory synthesis under STL constraints is often addressed by transcribing the underlying spatio-temporal constraints into finite-dimensional optimization problems. A prominent class of approaches relies on mixed-integer encodings: for linear systems, this typically leads to Mixed-Integer Linear Programs (MILPs), which provide sound and complete encodings of the STL satisfaction problem, but at the cost of exponential worst-case complexity, thereby limiting scalability \cite{belta2019formal,kurtz2022mixed,sadraddini2018formal,raman2015reactive,haghighi2019control,verhagen2024temporally}. 

Alternative optimization-based approaches treat STL synthesis more directly as a trajectory-optimization or optimal-control problem. These include methods based on smooth robustness surrogates and gradient-based optimization \cite{belta2019formal}, convex-concave or successive-convexification schemes \cite{takayama2025stlccp,uzun2026successive}, and sampling-based optimal-control methods such as MPPI for STL costs or minimum-violation planning \cite{halder2026lexicographic}. Such methods can avoid an explicit mixed-integer encoding of all logical decisions and can be integrated into standard optimal-control pipelines. However, they generally lead to nonconvex optimization problems and rely on local or approximate solution procedures; in particular, gradient-based methods may suffer from slow convergence and sensitivity to initialization, especially when the STL robustness landscape is nonsmooth or poorly conditioned. Moreover, these approaches typically generate an open-loop trajectory, with no explicit mechanism to preserve STL satisfaction during closed-loop execution.


\begin{figure}[t!]
    \centering
    \includegraphics[width=0.8\linewidth]{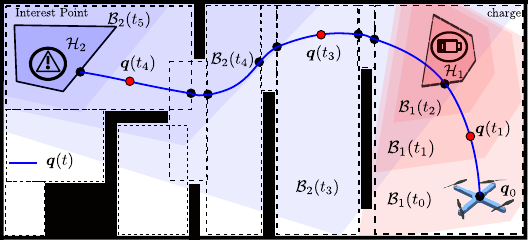}
    \caption{Trajectory $\config(t)$ of a quadrotor navigating through collision-free convex regions (dashed rectangles) while satisfying an STL specification by ensuring forward invariance of convex time-varying sets. The evolution of these sets is illustrated by the zero-level set boundaries of their defining functions, shown with different shades of red and blue. Black dots mark the junctions between consecutive B-splines, while red dots denote the robot configuration at the times corresponding to the displayed set of snapshots.}
    \label{fig:stl_example}
\end{figure}

Beyond MILP formulations, control theoretic methods have recently emerged as an effective alternative for enforcing STL specifications by directly incorporating STL constraints into analytical or computationally light feedback control laws\cite{lindemann2025formal,lindemann2017prescribed,lindemann2018control,buyukkocak2024sequential,buyukkocak2025resilient,rousseas2026operator}. In essence, such approaches rely on the ability to construct a time-varying set whose forward invariance implies satisfaction of the STL specification. Controllers achieving this objective are commonly designed within either the Control Barrier Function (CBF) framework \cite{ames2016control, lindemann2018control} or the Prescribed Performance Control (PPC) \cite{lindemann2017prescribed} framework. The benefits of such a method are twofold. On the one hand, it can enable direct feedback synthesis without an explicit planning stage, thus avoiding the computational burden of mixed-integer trajectory generation, although typically for restricted STL fragments and relatively simple environments. On the other hand, because the specification is enforced through feedback, these methods inherently provide greater robustness to disturbances and model mismatch. This is especially relevant when planning is performed using simplified system models to improve tractability.

When considering the general problem of trajectory planning without complex STL constraints, the framework of Graphs of Convex Sets (GCS), introduced by \cite{marcucci2024shortest,marcucci2023motion}, has recently shown remarkable performance in synthesizing smooth trajectories in nonconvex environments \cite{marcucci2023motion, pries2026admm}. The main idea is to decompose a nonconvex feasible region into a collection of convex subsets and encode their adjacency through a graph, so that planning can be cast as a shortest-path problem over a graph of convex sets. Remarkably, this representation preserves much of the combinatorial structure of the original trajectory planning problem while admitting a tight convex relaxation, thus enabling the use of efficient off-the-shelf convex optimization tools. Although the resulting mixed-integer problem is generally solved only approximately, bounds on sub-optimality of the relaxation are available \cite{marcucci2023motion}, and an integer-feasible solution is guaranteed to satisfy the original specification.


\subsubsection*{\textbf{Contributions}}
We propose a unified planning and control framework for systems subject to STL specifications by combining a time-varying-set reformulation of STL constraints with the Graphs of Convex Sets (GCS) framework.
We focus on a rich fragment of STL defined over convex predicates in configuration space. In the spirit of \cite{lindemann2018control}, we encode STL tasks as time-varying sets whose forward invariance guarantees satisfaction with a prescribed robustness margin. This representation plays a dual role. At the control level, it provides a sufficient condition for STL satisfaction under feedback. At the planning level, it restricts the feasible trajectories to evolve over a set of convex regions in the joint space and time domain. These regions are organized into a directed \emph{task graph}, whose edges encode the admissible temporal progression of the specification (see \fig\ref{fig:gsc_stl_example} for a simple example). Combining this graph with a \emph{collision-free graph} induced by a convex decomposition of the free space yields a \emph{product graph} that captures both task progression and obstacle-avoidance constraints. This discrete structure is then endowed with continuous decision variables and convex constraints, yielding a GCS formulation of the planning problem as a shortest-path problem over convex spatio-temporal sets. The resulting trajectories are parameterized by B-splines, enabling continuous-time enforcement of set containment, smoothness, and robustness margins (as summarized in \fig\ref{fig:stl_example}).

\begin{figure}[t!]
    \centering
    \includegraphics[width=0.8\linewidth]{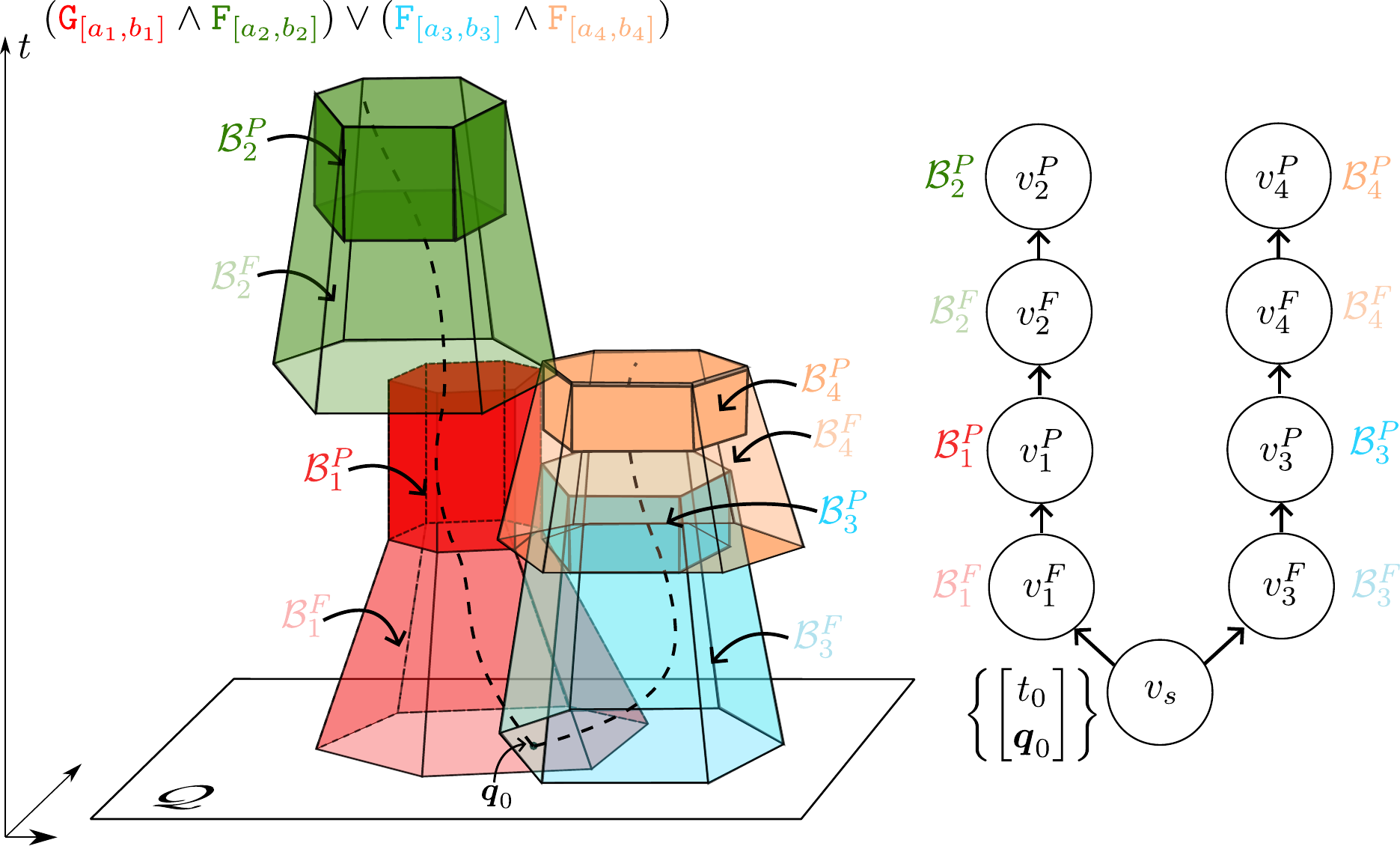}
    \caption{Illustration of the GCS associated to a simple task graph. A disjunctive STL specification induces a collection of convex spatio-temporal sets in the joint time--configuration space, including funnel sets \(\mathcal{B}_k^F\) and robustified predicate sets \(\mathcal{B}_k^P\). The two dashed trajectories show two feasible evolutions from the same initial condition \((t_0,q_0)\), each satisfying one branch of the specification by traversing the corresponding sequence of sets. The graph on the right encodes the admissible logical progression between these convex regions and provides the discrete structure underlying the trajectory-planning problem.}
    \label{fig:gsc_stl_example}
\end{figure}

A key feature of the proposed approach is that the same admissible sets used at the planning layer are also exploited at the control layer. Specifically, the planned trajectory is tracked using a CBF-based architecture that renders the active STL-induced sets forward invariant during execution. This provides a direct connection between high-level temporal-logic planning and low-level feedback control, and offers a mechanism to preserve task satisfaction even when the executed trajectory deviates from the nominal trajectory. In this sense, the framework does not treat STL satisfaction as a purely offline planning property, but rather as a closed-loop property supported by both the planner and the controller.

The main contributions of the paper are as follows:
\begin{enumerate}
    \item we introduce a GCS-based formulation for STL-constrained trajectory generation in nonconvex environments, built on a time-varying-set representation of a rich fragment of STL;
    \item we show how the same STL-induced admissible sets can be used at the control level within a forward-invariance-based feedback architecture, thereby linking planning and closed-loop task satisfaction;
    \item we validate the proposed framework in real-world experiments on a free-flyer robot and in a 3D quadrotor simulation study.
\end{enumerate}
The code will be released as open source upon acceptance.


\subsubsection*{\textbf{Related Works}}

GCS have previously been considered for temporal-logic planning in the context of LTL. In particular, \cite{kurtz2023temporal} formulates LTL motion planning as a shortest-path problem over a graph obtained from the product between a deterministic finite automaton and a transition system over a labeled partition of the configuration space. By contrast, our work considers STL, whose explicit timing semantics introduce additional complexity due to the need to satisfy tasks over prescribed time intervals. To address this, we introduce time-varying sets and their spatio-temporal counterparts, which not only encode the temporal dimension efficiently but also interface naturally with forward-invariance-based feedback control.

A complementary line of work enforces STL specifications through a combination of high-level task design and forward-invariance-based control. In \cite{yu2024continuous}, the authors introduce the signal temporal logic tree (sTLT), which converts satisfaction of nested STL formulas into a hierarchy of set-invariance conditions guiding the design of CBFs. This yields a feedback-based framework for continuous-time nonlinear systems, but it relies on reachable-set computations, which can be nontrivial in practice. Moreover, while obstacle avoidance can be incorporated through additional barrier functions, the approach is less suited to cluttered environments, where multiple obstacles may significantly increase the complexity of the underlying set computations. Finally, it does not explicitly address optimal trajectory generation.

Closer in spirit to our work, \cite{marchesini2025sampling} constructs time-varying sets whose forward invariance guarantees STL satisfaction and uses them to guide a sampling-based RRT$^\star$ planner for linear systems under input constraints. Our work shares the same invariance-based viewpoint, but replaces sampling-based search with a GCS-based shortest-path formulation over convex spatio-temporal sets. This allows us to exploit tight convex relaxations and continuous trajectory parameterizations, while naturally incorporating obstacle avoidance through the product graph construction.

Very recently and independently, \cite{chen2026signal,you2026framework} also explored GCS-based approaches for temporal-logic motion planning. In \cite{chen2026signal}, continuous-time STL motion planning is addressed through a timed-automaton abstraction coupled with a convex decomposition of the configuration space, yielding a GCS shortest-path formulation that generates smooth Bézier trajectories satisfying the specification. Our approach differs in that it works directly with time-varying convex sets whose forward invariance guarantees satisfaction with a prescribed robustness margin, and uses the same sets at the control level to maintain satisfaction during execution. Thus, while \cite{chen2026signal} focuses on motion-planning synthesis, our framework explicitly couples planning and feedback control.

The work \cite{you2026framework} considers a different class of specifications. Although phrased as a fragment of STL, the temporal operators encode untimed precedence constraints rather than explicit time intervals. This differs substantially from the timed STL setting considered here, where satisfaction depends on prescribed time windows and therefore requires reasoning in the joint time--configuration space.


\subsubsection*{\textbf{Organization}}
%
The remainder of the paper is organized as follows. Section~II-A presents the system model and the simplified planning coordinates used for trajectory synthesis. Section~II-B recalls STL and introduces the fragment considered in this work, while Section~II-C reviews the Graphs of Convex Sets framework. Section~III develops the time-varying and spatio-temporal set representation of STL formulas, together with the associated parameter-selection procedure. Section~IV introduces the task graph, collision-free graph, and product graph. Section~V formulates the resulting GCS-based trajectory-optimization problem using B-spline parameterizations. Section~VI presents the forward-invariance-based control layer. Finally, Sections~VII-A and~VII-B report experimental and simulation results, and Section~VIII concludes the paper.

\subsubsection*{\textbf{Notation}}
The sets $\nR{}$, $\nR{}_{\geq 0}$, and $\nN{}$ denote the real numbers, nonnegative real numbers, and natural numbers, respectively. Uppercase bold letters are used to denote matrices, uppercase calligraphic letters denote sets, and bold lowercase letters denote column vectors. We denote by $\bm{I}_n$ the $n\times n$ identity matrix. 
For a scalar-valued function $h : \mathbb{R}^n \to \mathbb{R}$, the \emph{$0$-superlevel set} is defined as
$\{\bm{x} \in \mathbb{R}^n \mid h(\bm{x}) \geq 0\}.$
The multiplication of a set $\mathcal{S} \subseteq \mathbb{R}^n$ by a positive scalar $\alpha \in \mathbb{R}_{>0}$ is defined as
$\alpha \mathcal{S} \define \{ \alpha \bm{x} \mid \bm{x} \in \mathcal{S} \}$.
%
%
For $t \in \mathbb{R}$ and an interval $[a,b]\subseteq \mathbb{R}$, then $t \oplus [a,b] = [t+a,\,t+b]$. For an integer \(n \in \mathbb{N}\), we let $\rangen{n}\define\{1,\dots,n\}$.
Given a differentiable scalar function $h:\mathbb{R}^n \to \mathbb{R}$ and vector fields $f:\mathbb{R}^n \to \mathbb{R}^n$ the Lie derivative of \(h\) along \(f\) is \(L_fh(\state)\define \frac{\partial h(\state)}{\partial \state}f(\state)\). A continuous function \(\alpha:[0,a)\to\Rpluseq\), with \(a>0\), belongs to class \(\mathcal{K}\) if it is strictly increasing and satisfies \(\alpha(0)=0\). A continuous function \(\alpha:(-b,a)\to\mathbb{R}\), with \(a,b>0\), belongs to extended class \(\mathcal{K}\) if it is strictly increasing and satisfies \(\alpha(0)=0\).
For an integer \(r\geq 0\), \(\mathcal{C}^r\) denotes the space of \(r\)-times continuously differentiable functions. 

Given a locally Lipschitz function $f: \nR{\xdim}\to \nR{}$, its generalized gradient at $\state \in \nR{\xdim}$ is $\partial f(\state) = \convhull\{ \lim_{i\to \infty} \nabla f(\state_i) : \state_i \to \state, \state_i \notin \mathcal{S} \union \mathcal{Q}_f \}$ where $\convhull(\mathcal{P})$ denotes the convex hull of a set $\mathcal{P}$, $\mathcal{Q}_f$ is the zero-measure set where $f$ is non-differentiable and $\mathcal{S}$ is any set of measure zero \cite{cortes2008discontinuous}.

%% file: Sections/sys_model.tex
\section{Preliminaries}
\subsection{System model}\label{sec:sys_model}

Consider the input-affine continuous-time nonlinear system
\begin{equation}\label{eq:nlsys}
    \dstate = \ff(\state) + \gG(\state)\sysinput,
\end{equation}
with state \(\state \in \stateSet \subseteq \mathbb{R}^{\xdim}\) and input \(\sysinput \in \inputSet \subseteq \mathbb{R}^{\udim}\), where \(\ff : \stateSet \to \mathbb{R}^{\xdim}\) and \(\gG : \stateSet \to \mathbb{R}^{\xdim \times \udim}\) are continuously differentiable. 

For planning purposes, we assume the availability of a planning model in chain-of-integrators form, expressed in terms of a configuration variable \(\config \in \configSet \subseteq \mathbb{R}^{\configdim}\). 
Let \(\flatoutdegree \in \mathbb{N}_{\geq 1}\) denote the order of the planning model, i.e., the number of integrators between the planning input and each component of \(\config\). Introducing the associated planning state
\begin{equation}\label{eq:planning_coordinates}
\planState
=
\begin{bmatrix}
\config^\top & \dot{\config}^\top & \cdots & \config^{(\flatoutdegree-1)\top}
\end{bmatrix}^\top
\in \mathbb{R}^{\planmodeldim},
\end{equation}
we assume that \(\planState\) is related to the original state \(\state\) through a transformation \(\planState=\feedlinT(\state)\), so that the planning dynamics are
\begin{equation}\label{eq:planning_model}
    \config^{(\flatoutdegree)} = \planInput,
\end{equation}
where \(\planInput \in \mathbb{R}^{\configdim}\) is the planning input. Such a model may arise either from a feedback-linearizing/flatness-based representation of system~\eqref{eq:nlsys} \cite{murray1995differential,fliess1995flatness} or from a suitable reduced-order model used for planning \cite{csomay2025bezier,cohen2024safety}. In general, \(\planInput\) should be interpreted as an auxiliary input of the planning model, related to the physical input \(\sysinput\) through the transformation \(\feedlinT\) and, when applicable, the corresponding feedback-linearizing input map \cite{murray1995differential}.

Under this model, any sufficiently differentiable reference trajectory
\(
\planState_d(t)=
\begin{bmatrix}
\config_d^\top(t) &
\dot{\config}_d^\top(t) &
\cdots &
\config_d^{(\flatoutdegree-1)\top}(t)
\end{bmatrix}^\top
\)
is dynamically feasible by choosing \(\planInput(t)=\config_d^{(\flatoutdegree)}(t)\). The resulting reference is then tracked on the original nonlinear system~\eqref{eq:nlsys} by the feedback controller which will be introduced in Section~\ref{sec:control}.

The configuration space \(\configSet\) contains \(\obsnum\) obstacles \(\{\obsReg_\obsID\}_{\obsID \in \rangen{\obsnum}}\), and the free space is
\begin{equation}\label{eq:freespace}
    \freespace \define \configSet \setminus \bigunion_{\obsID \in \rangen{\obsnum}} \obsReg_\obsID .
\end{equation}
Each obstacle \(\obsReg_\obsID\) is assumed to admit an implicit description through continuously differentiable functions \(\{\obsfun{\obsID,\obsconsID} : \mathbb{R}^{\configdim} \to \mathbb{R}\}_{\obsconsID \in \rangen{\obscons{\obsID}}}\). The associated obstacle function is
\begin{equation}\label{eq:obst_def}
    \obsfun{\obsID}(\config) \define \max_{\obsconsID \in \rangen{\obscons{\obsID}}} \obsfun{\obsID,\obsconsID}(\config),
\end{equation}
so that
\(
\obsReg_\obsID = \{\config \in \configSet \mid \obsfun{\obsID}(\config) < 0\},
\)
see Fig.~\ref{fig:stl_example}.

%% file: Sections/stl_intro.tex
\subsection{Signal Temporal Logic}\label{sec:stl_intro}
Signal Temporal Logic (STL) \cite{maler2004monitoring} is a formal language for specifying temporal properties of continuous-time signals. In our context, such signals are robot trajectories in configuration space. 
STL extends propositional logic with temporal operators that describe how such signals evolve over time, making it particularly suitable for expressing task specifications such as “always avoid obstacles” or “eventually reach the goal within 10 seconds.”
The basic building blocks of STL are \emph{predicates}, which capture conditions on the state space that the signal must satisfy. 
Formally, we define a \emph{predicate function} $\predFun:\nR{\configdim} \to \nR{}$, whose $0$-superlevel set is
\begin{equation}
    \predset \define \{\config \in \configSet \mid \predFun(\config)\geq 0\},
\end{equation}
which we denote as \emph{predicate set}.  
The membership condition $\config \in \predset$ is encoded by the Boolean predicate
\begin{equation}
    \pred{\predFun}(\config) = \begin{cases}
        \true, \; \; \predFun(\config) \geq 0
        \\
        \false, \; \; \predFun(\config) < 0,
    \end{cases}
\end{equation}
such that $\pred{\predFun}(\config(t)) = \true \Longleftrightarrow \config(t)\in \predset$. Starting from these atomic predicates, more complex STL formulas can be built using logical connectives (e.g., $\neg$ “not,” $\wedge$ “and”) and temporal operators. The syntax is defined recursively as
\begin{equation}\label{eq:fullstl}
\formulaT \stldef \true \stlsep \pred{h} \stlsep \neg \formulaT \stlsep  \formulaT_1 \wedge \formulaT_2 \stlsep \formulaT_1 \untilab{a}{b} \formulaT_2,
\end{equation}
where $\untilab{a}{b}$ is the until operator, expressing that $\formulaT_1$ must hold continuously until $\formulaT_2$ becomes true within the interval $\interval{a}{b}$, with $0 \leq a \leq b$.
Other operators are defined in the usual way:
\begin{itemize}
\item \emph{Disjunction}: $\formulaT_1 \vee \formulaT_2 \equiv \neg (\neg \formulaT_1 \wedge \neg \formulaT_2)$,
\item \emph{Always}: $\alwaysab{a}{b} \formulaT \equiv \neg (\true \untilab{a}{b} \neg \formulaT)$, meaning that $\formulaT$ holds at all times in the interval $\interval{a}{b}$,
\item \emph{Eventually}: $\eventuallyab{a}{b} \formulaT \equiv \true \untilab{a}{b} \formulaT$, meaning that $\formulaT$ becomes true at least once within $\interval{a}{b}$.
\end{itemize}
We here consider tasks $\formulaT$ with bounded time domain i.e. formulas whose temporal operators have a bounded domain. 

\begin{example}\label{ex:basic_stl}
Consider a mobile robot whose state is its position $\config(t) \in \mathbb{R}^2$.
We define two predicate functions:
\begin{align}
h_{\text{safe}}(\config) &= o(\config), \\
h_{\text{goal}}(\config) &= r_{\text{goal}} - \lVert \config - \pp_{\text{goal}} \rVert,
\end{align}
where the function $o(\config)$ is defined as in \eqref{eq:obst_def}, 
$\pp_{\text{goal}} \in \mathbb{R}^2$ is the goal position, and $r_{\text{goal}}>0$ is the goal tolerance radius (see Fig.~\ref{fig:stl_example}).
The predicate $\pred{h_{\text{safe}}}(\config(t))$ is true, at time $t$, whenever the robot is outside the obstacle region.
The predicate $\pred{h_{\text{goal}}}(\config(t))$ is true whenever the robot lies within the goal region.
The $0$-superlevel set of $h_{\text{safe}}(\config)$ and $h_{\text{goal}}(\config)$ are respectively indicated as $\mathcal{H}_{\text{safe}}$ and $\mathcal{H}_{\text{goal}}$.
The formula $\phi = \alwaysab{0}{T} \pred{h_{\text{safe}}}
    \wedge
    \eventuallyab{0}{10}\pred{h_{\text{goal}}}$
formalizes the requirement: “the robot must always remain safe during the horizon $[0,T]$ and must eventually reach the goal within 10 seconds.” 

\begin{figure}
    \centering
    \includegraphics[width=0.6\linewidth]{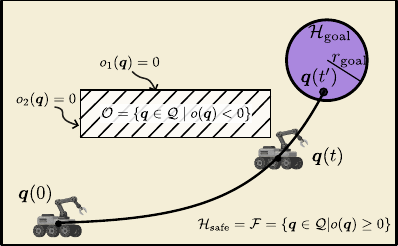}
    \caption{Illustration of Example~\ref{ex:basic_stl}. The robot satisfies the STL formula if and only if $t'\in \interval{0}{10}$. 
    }
    \label{fig:stl_example}
\end{figure}
\end{example}

To characterize satisfaction of an STL formula by a given signal, we consider quantitative (\emph{robust}) semantics, which assigns to each formula a real-valued robustness measure $\robustness{\formulaT}(\config,t) \in \mathbb{R}$. 
This robustness characterizes not only whether a signal satisfies a given STL task, but also \emph{how robustly} it does so. Formally, let $(\config,t) \stlsatisfies \formulaT$ denote that the signal $\config$, starting from time $t$, satisfies $\formulaT$. 
The robustness $\robustness{\formulaT}(\config,t)$ is recursively defined as:
\begin{subequations}\label{eq:robust semantics}
\begin{align}
\robustness{\pred{}}(\config,t)&=\predFun(\config(t)), \label{eq:predicate robust}\\
\robustness{\neg \psi}(\config,t)& = -\robustness{\psi}(\config,t), \label{eq:negation robust}\\
\robustness{F_{[a, b]} \psi}(\config,t)&=\max _{\tau \in t\oplus [a, b]} \{\robustness\psi\left(\config, \tau \right)\}, \label{eq:eventually robust}\\
\robustness{G_{[a, b]} \psi}(\config,t)&= \min _{\tau \in t\oplus[a, b]} \{\robustness\psi\left(\config, \tau\right)\},\label{eq:always robust}\\
\robustness{\psi_1 U_{[a,b]}\psi_2}(\config,t) &= \max_{\tau \in t\oplus[a,b]}\label{eq:until robust} \\ 
& \hspace{-1.5cm} \{ \min\{\robustness{\psi_2}(\config, \tau ), \min_{\tau' \in [t,\tau]} \robustness{\psi_1}(\config,\tau')\}\},\notag\\
\robustness{\psi_1 \wedge \psi_2}(\config,t)& = \min \left\{\robustness{\psi_1}(\config,t), \robustness{\psi_2}(\config,t)\right\},  \label{eq:conjunction robust} 
\end{align}
\end{subequations}
From \eqref{eq:robust semantics}, it follows that  $\robustness{\formulaT}(\config,t)>0 \implies (\config,t) \stlsatisfies \formulaT$ \cite[Prop. 16]{yu2024continuous}. 
Here, the time $t$ indicates the starting point at which we evaluate satisfaction of the STL formula with respect to the entire trajectory. 
Unlike other semantics, the quantitative semantics provide a \emph{degree of satisfaction}. 
In particular, we say that $\config$ \emph{robustly satisfies} $\formulaT$ with margin $\robustMargin>0$ if $\robustness{\formulaT}(\config,t) \geq \robustMargin$.


\subsection*{STL Fragment for Graph-of-Convex-Sets Planning}\label{subsec:stl_frag}

The problem of synthesizing dynamically feasible trajectories that satisfy a given STL specification (hereafter \emph{STL synthesis}) under the complete STL grammar \eqref{eq:fullstl} is known to be NP-hard, and thus, often computationally intractable. In this work, we restrict attention to a broad class of STL specifications that remains practically relevant while enabling a more tractable treatment. 

First, we assume that, for each predicate function \(\predFun(\config)\) defined over the configuration variable \(\config\), the corresponding \(0\)-superlevel set 
\(
    \predset
\)
is convex. A typical and practically relevant case is that of polyhedral predicates. In particular, given \(n_h>0\), \(\cc_k\in\nR{\configdim}\), and \(d_k\in\nR{}\), we define
%
\begin{equation}
h(\config) = \min_{k \in [n_{h}]} \{-\cc_k^\top \config + d_k\},
\label{eq:pred_set}
\end{equation}
Then the set \(\predset\) is the polyhedron
$\predset = \{\config \mid \cC \config \leq \dd \}$ with $\cC = [\cc_1 \ \dots \ \cc_{\numconsh}]^\top \in \mathbb{R}^{n_h \times \configdim}$ and $\dd \define \begin{bmatrix}
    d_1 & \hdots & d_{\numconsh}
\end{bmatrix}^\top \in \mathbb{R}^{n_h}$.

Second, we consider a subfragment of STL, 
which we refer to as Disjunctive Convex Signal Temporal Logic (dc-STL), in analogy to the Disjunctive Normal Form considered in first-order logic:
\begin{subequations}
    \begin{align}
    \formulaNT &\stldef  
    \tempopab{a}{b}\pred{h} \stlsep \tempopab{a}{b}\tempopab{a'}{b'}'\pred{h} \stlsep \pred{h_1} \untilab{a}{b} \pred{h_2},
    \label{eq:temp_extended_pred}
    \\
    \formulaT &\stldef \formulaNT \mid  \formulaT_1 \wedge \formulaT_2,
    \label{eq:stl_conj_phi}
    \\
    \formulaOR & \stldef \formulaT \stlsep \formulaT_1 \vee \formulaT_2, 
    \label{eq:stl_or_formulas}
\end{align}%
\label{eq:stlfrag}
\end{subequations}%
where $\tempop , \tempop' \in \{\eventually, \always\}$.
Formulas of class \(\formulaNT\) represent elementary temporal requirements involving either a single temporal operator or a composition of two, e.g., “eventually reach a region,” “always eventually monitor an area,” or “recurrently visit a location of interest.” More complex behaviors are obtained by conjunction, yielding formulas of class \(\formulaT\) as in \eqref{eq:stl_conj_phi} (see Example~\ref{ex:basic_stl}). Finally, formulas of class \(\formulaOR\), as in \eqref{eq:stl_or_formulas}, capture disjunctive requirements, i.e., alternative conjunctions of temporal requirements, and are satisfied whenever at least one branch is satisfied.
The separation between conjunctions and disjunctions is introduced only to simplify the exposition. Moreover, any Boolean combination of formulas of class \(\formulaNT\) using only \(\wedge\) and \(\vee\) can be rewritten in the form \eqref{eq:stlfrag} by distributing conjunctions over disjunctions, as in the standard conversion to disjunctive normal form.


From a trajectory-optimization perspective, the \emph{feasible region} associated with fragment \eqref{eq:stlfrag} is the set of all trajectories, or equivalently all decision variables of the STL synthesis problem, that satisfy a formula in the fragment. This feasible region can be interpreted as the logical OR of multiple convex sets, each corresponding to a distinct mode of satisfaction, whose union is in general nonconvex \cite{ceria1999convex}. It is precisely this structure that makes the considered STL fragment amenable to a Graphs of Convex Sets formulation, as shown in the sequel.

\subsection{Graphs of Convex Sets formulation}\label{sec:gcs_form}


GCS provide a convenient framework for representing trajectory-planning problems over nonconvex domains through a graph of convex regions \cite{marcucci2024shortest,marcucci2023motion}. 

Formally, a GCS is a directed graph
$
\graph_{\text{GCS}} = (\vertices_{\text{GCS}}, \edges_{\text{GCS}}),
$
where each vertex \(v \in \vertices_{\text{GCS}}\) is associated with a decision variable \(\vertexvar{v} \in \mathbb{R}^{n_v}\), a closed convex set \(\vertexset{v} \subseteq \mathbb{R}^{n_v}\), and a convex cost \(\ell_v\), while each edge \(e=(u,v)\in\edges_{\text{GCS}}\) is associated with a closed convex set \(\vertexset{e} \subseteq \mathbb{R}^{n_u+n_v}\) and a convex cost \(\ell_e\) \cite{marcucci2024shortest, marcucci2025unified}. 

Given a source vertex and a target vertex, let \(\pathsSet\) denote the set of all source--target paths in \(\graph_{\text{GCS}}\), and let \(\edges_{\path}\subseteq\edges_{\text{GCS}}\) denote the edges traversed by a path \(\path\in\pathsSet\). The corresponding GCS shortest-path problem is
\begin{subequations}\label{eq:gcs_spp}
\begin{align}
\min_{\substack{\path \in \pathsSet \\ \vertexvar{v},\, v\in\vertices_{\text{GCS}}}}
\quad &
\sum_{v\in\vertices_{\text{GCS}}} \ell_v(\vertexvar{v})
+
\sum_{(u,v)\in\edges_{\path}} \ell_e(\vertexvar{u},\vertexvar{v})
\label{eq:gcs_spp_a}
\\
\text{s.t.}\quad
& \vertexvar{v} \in \vertexset{v},
\qquad \forall v \in \path,
\label{eq:gcs_spp_c}
\\
& (\vertexvar{u},\vertexvar{v}) \in \vertexset{e},
\qquad \forall (u,v)\in\edges_{\path}.
\label{eq:gcs_spp_d}
\end{align}
\end{subequations}
The specific form of the vertex variables, costs, and coupling constraints depends on the trajectory parameterization and on the geometric structure of the problem at hand. In the present work, the vertex variables \(\vertexvar{v}\) will later parameterize the control points of B-spline \cite{biagiotti2008trajectory} segments associated with the graph vertices, while the costs \(\ell_v\) and \(\ell_e\) will encode surrogates of trajectory length and the constraint sets \(\vertexset{v}\) and \(\vertexset{e}\) will enforce smoothness and STL-related conditions on the resulting trajectory. In particular, in Section~\ref{sec:stl2tvsets} we show how STL constraints can be reformulated as convex sets in the joint time--configuration space, which will later define the GCS constraint sets \(\vertexset{v}\) and \(\vertexset{e}\) (see Fig.~\ref{fig:gsc_stl_example}).



%% file: Sections/stl_tv_sets.tex
\section{From STL tasks to time-varying sets}\label{sec:stl2tvsets}
Prior work \cite{lindemann2017prescribed, lindemann2018control, charitidou2021barrier, marchesini2025sampling} has shown that the \textit{STL synthesis} problem can be recast as the problem of maintaining the system state within a suitable \emph{time-varying set}. Namely, given a formula $\formulaOR$ in fragment \eqref{eq:stlfrag}, it is possible to algorithmically construct a time varying set $\tvset{}_\formulaOR(t)$ with the following property:
\begin{equation}\label{eq:sufficient condition}
    \forall t \in \Rpluseq, \quad \config(t) \in \tvset{}_ \formulaOR(t) \quad \implies \quad (\config, 0) \stlsatisfies \formulaOR.
\end{equation}
In other words, condition \eqref{eq:sufficient condition} replaces the satisfaction condition of the formula $\formulaOR$ with a sufficient condition in terms of the \textit{forward invariance} of a suitably defined time-varying set:
\begin{definition}[Forward Invariance]\label{def:fw}
A time-varying set $\tvset{}(t)$ is \emph{forward invariant} for system \eqref{eq:planning_model} from the initial condition $\config_0 :=\config(t_0) \in \tvset{}(t_0)$, $t_0 \geq 0$, if there exists a control input $\sysinput : [t_0,\infty) \rightarrow \inputSet$ such that 
$$\config(t) \in \tvset{}(t), \qquad \forall t \ge t_0,$$
holds along the corresponding solution of \eqref{eq:planning_model}.
\end{definition}
Although the forward invariance condition \eqref{eq:sufficient condition} is only sufficient, it allows the high-level STL synthesis problem to be reformulated as a constrained trajectory-planning problem.

The remainder of the section is organized as follows. Section~\ref{subsec:tvfun} constructs the set \(\tvset{}_{\formulaNT}(t)\) associated with a single formula \(\formulaNT\) in \eqref{eq:temp_extended_pred}, so that the sufficient condition \eqref{eq:sufficient condition} holds. Sections~\ref{subsec:compos_sets} and~\ref{sec:specifications} then show how to recursively construct the sets \(\tvset{}_{\formulaT}(t)\) and \(\tvset{}_{\formulaOR}(t)\) associated with formulas \(\formulaT\) and \(\formulaOR\), by taking intersections and unions of sets of the form \(\tvset{}_{\formulaNT}(t)\) according to the grammar in \eqref{eq:stlfrag}.

\subsection{Single Task To Time-Varying Sets}\label{subsec:tvfun}
Given a single STL task \(\formulaNT_\taskID\), with predicate function \(\predFun_\taskID\) as in \eqref{eq:temp_extended_pred}, we seek to construct a time-varying set \(\tvset{}_\taskID(t)\) satisfying \eqref{eq:sufficient condition}. To simplify notation, we write \(\tvset{}_\taskID(t)\) instead of \(\tvset{}_{\formulaNT_\taskID}(t)\). We restrict the discussion to formulas of the form \(\formulaNT_\taskID=\alwaysab{a}{b}\mu^h\) and \(\formulaNT_\taskID=\eventuallyab{a}{b}\mu^h\). Nested formulas can be handled as described in Remark~\ref{remark:nested}.

Starting from a given initial configuration \(\config_0\), the key idea is to construct, for each formula \(\formulaNT_\taskID\), a time-varying set \(\tvset{}_\taskID(t)\) that initially contains \(\config_0\) and then shrinks over time until it lies inside the predicate set associated with \(\formulaNT_\taskID\), with the timing required by the formula. Formally, we define
\begin{equation}
    \tvset{}_\taskID(t) \define \{\config \in \configSet \mid \stlcbf{}_\taskID(\config, t) \geq 0\},
    \quad t \in \interval{\tinittask{\taskID}}{\tfinaltask{\taskID}},
\end{equation}
where 
\begin{equation}\label{eq:main barrier function}
    \stlcbf{}_\taskID(\config, t) \define \predFun_\taskID(\config) + \lintermtv_\taskID(t).
\end{equation}
The function $\lintermtv_\taskID(t)$ in \eqref{eq:main barrier function} is a piecewise linear monotonically decreasing function,  with switching times $(\tinittask{\taskID}, \alpha_\taskID, \tfinaltask{\taskID})$ (Fig.~\ref{fig:gamma general image}):
\begin{equation}\label{eq:kappa_def}
    \lintermtv_\taskID(t) =
    \begin{cases}
         -\tfrac{\bar{\lintermtv}_{\taskID}}{\alpha_{\taskID}} t + \bar{\lintermtv}_\taskID - \robustMargin, & t \in [\tinittask{\taskID}, \alpha_{\taskID}) \\
         -\robustMargin, & t \in [\alpha_{\taskID}, \tfinaltask{\taskID}],
    \end{cases}
\end{equation}
where constant $\bar{\lintermtv}_{\taskID} > 0$ determines the slope of the decreasing transient. Let us define the robustified predicate set as
\begin{equation}
\robpredset_{\taskID}
\define
\{\config \in \configSet \mid \predFun_{\taskID}(\config)-\robustMargin \geq 0\}.
\end{equation}
The role of \(\lintermtv_\taskID(t)\) is to create a \emph{funnel} that progressively drives \(\tvset{}_\taskID(t)\) toward \(\robpredset_{\taskID}\), as illustrated in Fig.~\ref{fig:gamma general image}. Indeed, for every \(t \in [\alpha_\taskID,\tfinaltask{\taskID}]\), one has \(\lintermtv_\taskID(t)=-\robustMargin\). Hence, by \eqref{eq:main barrier function},
\[
\stlcbf{}_\taskID(\config,t)\geq 0
\quad\Longleftrightarrow\quad
\predFun_\taskID(\config)-\robustMargin \geq 0,
\]
which implies
\[
\tvset{}_\taskID(t)=\robpredset_{\taskID}\subseteq \predset_{\taskID},
\qquad
\forall t\in[\alpha_\taskID,\tfinaltask{\taskID}].
\]

This construction yields the desired sufficient condition for formulas involving either the \(\alwaysab{a_\taskID}{b_\taskID}\) or the \(\eventuallyab{a_\taskID}{b_\taskID}\) operator through an appropriate choice of \(\alpha_\taskID\) and \(\tfinaltask{\taskID}\). In particular, for \(\alwaysab{a_\taskID}{b_\taskID}\), we choose \(\alpha_\taskID=a_\taskID\) and \(\tfinaltask{\taskID}=b_\taskID\), so that \(\tvset{}_\taskID(t)\subseteq \predset_{\taskID}\) for all \(t\in[a_\taskID,b_\taskID]\). For \(\eventuallyab{a_\taskID}{b_\taskID}\), we instead choose \(\alpha_\taskID=\tfinaltask{\taskID}=t'\), with \(t'\in[a_\taskID,b_\taskID]\), so that the set reaches \(\robpredset_{\taskID}\) at time \(t'\), which is sufficient to enforce \(\eventuallyab{a_\taskID}{b_\taskID}\pred{\predFun_\taskID}\) (see Fig.\ref{fig:gamma general image}). 

Viewed in the joint time--configuration space, the set \(\tvset{}_\taskID(t)\) can be interpreted as a spatio-temporal funnel within which trajectories must evolve to satisfy \(\formulaNT_\taskID\) with robustness margin \(\robustMargin\). Accordingly, letting \(\spatiotemp \define (t,\config) \in \Rpluseq \times \configSet\), we define the associated spatio-temporal set
\begin{equation}\label{eq:spatiotempset}
    \tvsetspatiotemp_{\taskID}
    \define
    \left\{
    \spatiotemp \in \Rpluseq \times \configSet
    \;\middle|\;
    \stlcbf{}_\taskID(\config,t) \geq 0,\;
    t \in \interval{\tinittask{\taskID}}{\tfinaltask{\taskID}}
    \right\},
\end{equation}
so that \(\tvset{}_\taskID(t)\) is recovered as the slice of \(\tvsetspatiotemp_{\taskID}\) at time \(t\). Moreover, the geometry of \(\tvsetspatiotemp_{\taskID}\) naturally decomposes into two convex components:

1) the \emph{funnel spatio-temporal set}
\begin{equation}\label{eq:funnelset}
    \funnelset{\taskID}
    =
    \left\{
    \spatiotemp \in \Rpluseq \times \configSet
    \;\middle|\;
    \stlcbf{}_\taskID(\config,t) \geq 0,\;
    t \in \interval{\tinittask{\taskID}}{\alpha_\taskID}
    \right\},
\end{equation}
which shrinks over time toward the target region;

2) the \emph{robustified predicate spatio-temporal set}
\begin{equation}\label{eq:predicateset}
    \predicateset{\taskID}
    =
    \left\{
    \spatiotemp \in \Rpluseq \times \configSet
    \;\middle|\;
    \config \in \robpredset_{\taskID},\;
    t \in \interval{a_\taskID}{\tfinaltask{\taskID}}
    \right\},
\end{equation}
which corresponds to the phase in which the trajectory is required to satisfy the predicate with robustness margin \(\robustMargin\) (see the left panels of Fig.~\ref{fig:gamma general image}). Therefore, the spatio-temporal task set is equivalently represented as the union of these two convex components,
\[
\tvsetspatiotemp_{\taskID} = \funnelset{\taskID} \cup \predicateset{\taskID}.
\]

\begin{figure}
    \centering
    \includegraphics[width=\linewidth]{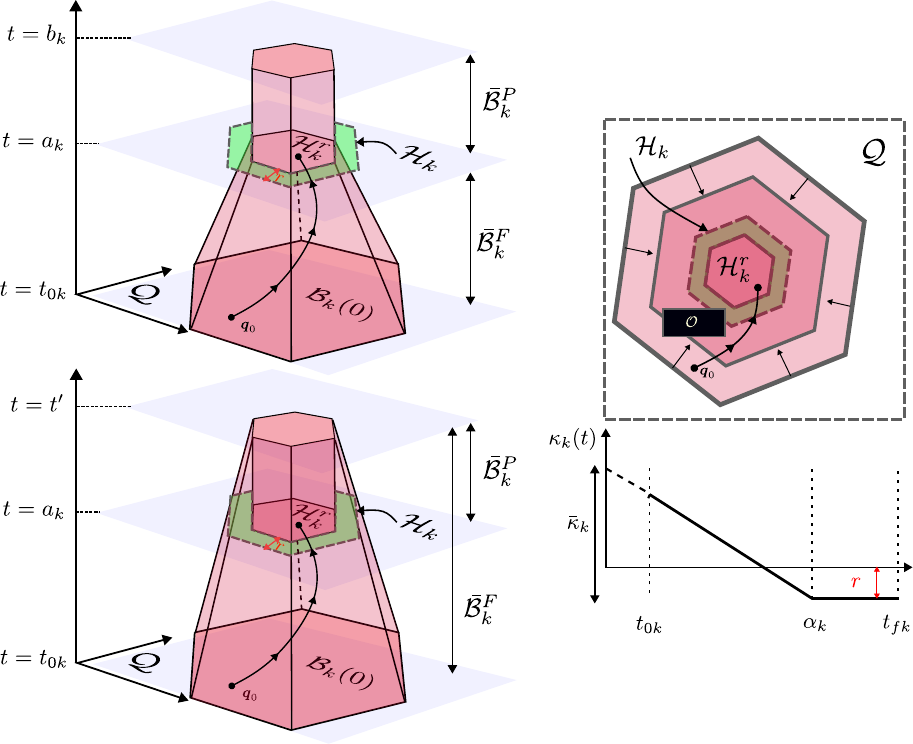}
    \caption{Set representation of a formula \(\formulaNT_\taskID\) induced by \eqref{eq:main barrier function} and \eqref{eq:kappa_def} for linear predicates. Left: spatio-temporal set \(\tvsetspatiotemp_{\taskID}\) for a \(\always\) operator (top) and an \(\eventually\) operator (bottom). Top right: evolution of \(\tvset{}_\taskID(t)\) in the configuration space, shrinking toward the robustified predicate set \(\robpredset_{\taskID}\). Bottom right: piecewise-linear function \(\lintermtv_\taskID(t)\).}
    \label{fig:gamma general image}
\end{figure}



\subsection{Conjunctions to Time-Varying Sets}\label{subsec:compos_sets}

Having defined the construction associated with a formula of class $\formulaNT$, we now show how to construct the time-varying set \(\tvset_{\formulaT}(t)\) associated with a conjunction \(\formulaT\). The key idea is that the set $\tvset{}_{\formulaT}(t)$ is obtained by intersecting the task sets corresponding to the tasks that are active at a given time. To make this construction precise, one must first specify how the tasks in the conjunction are ordered and, consequently, over which time intervals their associated sets are enforced.

Consider a conjunction of tasks
$
\formulaT = \bigwedge_{\taskID=1}^{\tasknum} \formulaNT_\taskID,
$
where each task \(\formulaNT_\taskID\) is associated with an STL interval \([a_\taskID,b_\taskID]\). We define the undirected \emph{overlap graph} \(\graph_{\mathrm{ov}}=(\vertices_{\mathrm{ov}},\edges_{\mathrm{ov}})\), with \(\vertices_{\mathrm{ov}}=\{1,\dots,\tasknum\}\), by
\[
(\taskID_1,\taskID_2)\in\edges_{\mathrm{ov}}
\quad\Longleftrightarrow\quad
\interval{a_{\taskID 1}}{b_{\taskID 1}}
\cap
\interval{a_{\taskID 2}}{b_{\taskID 2}}
\neq
\emptyset.
\]
Thus, an edge connects two tasks whenever their STL intervals overlap. The tasks that must be handled jointly are then exactly the connected components of \(\graph_{\mathrm{ov}}\). Since the connected components of a graph are uniquely defined, this yields a unique partition of the task set into $m_\phi$ groups (see Fig.~\ref{fig:overlapping}), which we denote by
\[
\{\mathcal{K}_{\mathrm{ov}}^1,\mathcal{K}_{\mathrm{ov}}^2,\ldots,\mathcal{K}_{\mathrm{ov}}^{m_\phi}\},
\qquad m_\phi \ge 1.
\]

Equivalently, two tasks \(\taskID_1\) and \(\taskID_2\) belong to the same group if and only if there exists a sequence of tasks
\[
\taskID_1=\ell_1,\ell_2,\dots,\ell_m=\taskID_2
\]
such that
\[
\interval{a_{\ell_j}}{b_{\ell_j}}
\cap
\interval{a_{\ell_{j+1}}}{b_{\ell_{j+1}}}
\neq
\emptyset,
\qquad j\in \rangen{m-1}.
\]
\begin{figure}[b]
    \centering
    \includegraphics[width=\linewidth]{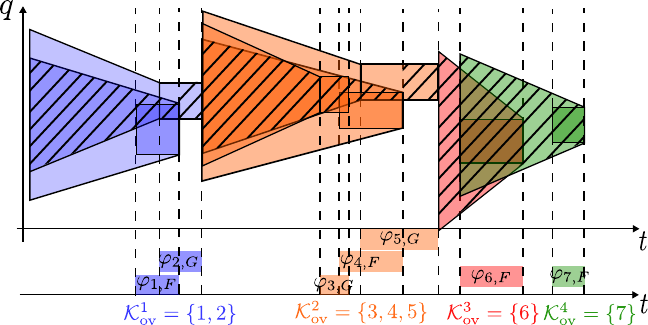}
    \caption{Illustration of the grouping of tasks in the conjunction \(\formulaT=\bigwedge_{k=1}^{N}\formulaNT_k\), shown in a one-dimensional configuration space. The striped region represents the admissible trajectory evolution. For brevity, the operator type is indicated in the subscript, e.g., \(\formulaNT_{1,\eventually{}{}}\).}
    \label{fig:overlapping}
\end{figure}

The simplest group is therefore a singleton, corresponding to a task whose STL interval does not overlap with any other one. More generally, each group contains tasks whose intervals are linked, directly or transitively, by overlap, and which therefore cannot be arranged in a strict sequential order internally. 

The groups themselves can instead be ordered uniquely by their temporal location (see Fig.~\ref{fig:overlapping}). Indeed, if \(\mathcal{K}_{\mathrm{ov}}^i\) and \(\mathcal{K}_{\mathrm{ov}}^j\) are two distinct connected components of \(\graph_{\mathrm{ov}}\), then no interval in one component overlaps any interval in the other. Hence, the groups are temporally ordered, i.e., for any \(i\neq j\),
\[
\max_{\taskID \in \mathcal{K}_{\mathrm{ov}}^i} b_\taskID
<
\min_{\taskID \in \mathcal{K}_{\mathrm{ov}}^j} a_\taskID
\quad\text{or}\quad
\max_{\taskID \in \mathcal{K}_{\mathrm{ov}}^j} b_\taskID
<
\min_{\taskID \in \mathcal{K}_{\mathrm{ov}}^i} a_\taskID.
\]
Accordingly, after relabeling the components if necessary, we assume they are indexed so that
\[
\max_{\taskID \in \mathcal{K}_{\mathrm{ov}}^i} b_\taskID
<
\min_{\taskID \in \mathcal{K}_{\mathrm{ov}}^{i+1}} a_\taskID,
\qquad
i \in \rangen{m_\phi-1} 
\]

Once the groups have been formed, each task \(\formulaNT_\taskID\), with \(\taskID \in \mathcal{K}_{\mathrm{ov}}^i\), is associated with an interval
$
[\tinittask{\taskID},\,\tfinaltask{\taskID}],
$
over which the corresponding time-varying set is defined. The final time is fixed as
$
\tfinaltask{\taskID}=b_{\taskID},
$
whereas the initial time \(\tinittask{\taskID}\) depends on the type of the tasks contained in the preceding group \(\mathcal{K}_{\mathrm{ov}}^{i-1}\) and is specified in Section~\ref{sec:param_selection}.

The activation intervals are chosen so as to be consistent with the ordering of the groups. At any given time, the tasks that are actually enforced belong to a unique group, although tasks in later groups may already be defined. We therefore define the \emph{active set of tasks} at time \(t\) as
\begin{equation}
\mathcal{I}^{\formulaT}(t) \subseteq \rangen{\tasknum} 
\label{eq:active_set}
\end{equation}
where \(\mathcal{I}^{\formulaT}(t)\) contains precisely the tasks whose corresponding time-varying sets are currently enforced at time \(t\). In particular,
\[
\taskID \in \mathcal{I}^{\formulaT}(t)
\;\Longrightarrow\;
\tinittask{\taskID} \le t \le \tfinaltask{\taskID},
\]
but the converse does not necessarily hold. Indeed, a task may be defined over a given interval without being active yet, since activation must also be consistent with the completion of the preceding group in the chosen ordering. For tasks involving the \eventuallytxt{} operator, moreover, termination is not determined solely by time.
More precisely, tasks of type \alwaystxt{} remain active until the deterministic time \(t=b_{\taskID}\), whereas tasks of type \eventuallytxt{} remain active only until they are satisfied, namely until the trajectory enters the corresponding robustified predicate set at some time
$
t' \in \interval{a_\taskID}{b_\taskID}.
$
Once this happens, the corresponding eventually task is regarded as completed, and its associated constraint is no longer enforced. 

The same holds for formulas involving the \untiltxt{} operator. In particular, a formula
\(
    \pred{h_{\taskID 1}} \untilab{a}{b} \pred{h_{\taskID 2}}
\)
is treated as an \eventuallytxt{}-type task whose funnel phase additionally enforces the first predicate. Let \(\funnelset{\taskID 2}\) and \(\predicateset{\taskID 2}\) denote, respectively, the funnel and predicate spatio-temporal components associated with the \eventuallytxt{} task for \(h_{\taskID 2}\) over \(\interval{a}{b}\). Then the corresponding \untiltxt{} components are defined as
\begin{equation}
    \funnelset{\taskID 1\untilab{a}{b}\taskID 2}
    \define
    \predicateset{\taskID 1}
    \cap
    \funnelset{\taskID 2},
    \label{eq:until_funnel_set}
\end{equation}
while the predicate component is defined as
\(
    \funnelset{\taskID 1\untilab{a}{b}\taskID 2}
    \define
    \predicateset{\taskID 1} \cap \funnelset{\taskID 2}.
\)
Thus, during the funnel phase, the trajectory is required to satisfy \(h_{\taskID 1}\) while being driven toward the robustified predicate set of \(h_{\taskID 2}\). Once the latter predicate component is reached, the \untiltxt{} formula is completed, exactly as for an \eventuallytxt{} task.

The set corresponding to the active set of tasks \(\mathcal{I}^{\formulaT}(t)\) is then
\begin{equation}
    \tvset{}_\formulaT(t)
    =
    \bigcap_{\taskID \in \mathcal{I}^{\formulaT}(t)} \tvset{}_\taskID(t),
    \label{eq:global_plan_set}
\end{equation}
and the associated spatio-temporal set is
\begin{equation}
    \tvsetspatiotemp_\formulaT
    =
    \left\{
    \spatiotemp=(t,\config) \in \Rpluseq \times \configSet
    \;\middle|\;
    \config \in \tvset{}_\formulaT(t)
    \right\}.
    \label{eq:st_global_plan_set_expanded}
\end{equation}

As an example, consider the simple conjunction
\[
\formulaT = \formulaNT_{\taskID 1} \wedge \formulaNT_{\taskID 2},
\]
with tasks $\formulaNT_{\taskID 1}
=
\tempopab{a_{\taskID 1}}{b_{\taskID 1}}\pred{h_{\taskID 1}}$ and
$\formulaNT_{\taskID 2}
=
\tempopab{a_{\taskID 2}}{b_{\taskID 2}}\pred{h_{\taskID 2}}.
$
We distinguish two cases according to the relation between the STL intervals \([a_{\taskID 1},b_{\taskID 1}]\) and \([a_{\taskID 2},b_{\taskID 2}]\).

\emph{(i) Sequential tasks.}
If
\[
\interval{a_{\taskID 1}}{b_{\taskID 1}}
\cap
\interval{a_{\taskID 2}}{b_{\taskID 2}}
=
\emptyset,
\]
then the two STL tasks can be ordered unambiguously. In this case, the conjunction is represented by sequencing the corresponding sets in time (Fig.~\ref{fig:overlapping}). More precisely, the activation of the second task is chosen consistently with the completion of the first one, so that the conjunction set is first determined by \(\tvset{}_{\taskID 1}(t)\) and then by \(\tvset{}_{\taskID 2}(t)\). The precise transition between the two tasks is enforced by the graph construction introduced later in Section~\ref{sec:param_selection}.

\emph{(ii) Overlapping tasks.}
If
\[
\interval{a_{\taskID 1}}{b_{\taskID 1}}
\cap
\interval{a_{\taskID 2}}{b_{\taskID 2}}
\neq \emptyset,
\]
then the two STL tasks cannot be ordered unambiguously and must be enforced simultaneously over the time interval in which they are both active. In that interval, the corresponding conjunction set is the intersection of the two task sets (Fig.~\ref{fig:overlapping}):
\begin{equation}
\begin{aligned}
\tvset{}_\formulaT(t)
&=
\tvset{}_{\taskID 1}(t) \cap \tvset{}_{\taskID 2}(t)
\\
&=
\left\{
\config \in \configSet
\;\middle|\;
\min\!\bigl(
\stlcbf{}_{\taskID 1}(\config,t),
\stlcbf{}_{\taskID 2}(\config,t)
\bigr)\ge 0
,\right.
\\
& \hspace{4cm} \left. \taskID_1,\taskID_2 \in \mathcal{I}^{\formulaT}(t)\right\}.
\end{aligned}
\label{eq:overlap_intersection}
\end{equation}
If one of the two tasks terminates before the other, then it is removed from the active set \(\mathcal{I}^{\formulaT}(t)\). From that time onward, the corresponding time-varying set is no longer enforced, and the conjunction set reduces to the task set induced by the remaining active task.

Let
\[
\mathcal{T}_s
\define
\{\,t\in\Rpluseq \mid \mathcal{I}^{\formulaT}(t^-)\neq \mathcal{I}^{\formulaT}(t^+)\,\}
\]
denote the set of \emph{switching times}, namely the times at which the active-task set changes. 

\begin{definition}[Set expansion condition at switching times]\label{def:setexpansion}
The conjunction-induced time-varying set \(\tvset{}_\formulaT(t)\) is said to satisfy the \emph{set expansion condition at switching times} if, for every switching time \(t_s\in\mathcal{T}_s\),
\begin{equation}
    \lim_{t \to t_s^-} \tvset{}_\formulaT(t)
    \subseteq
    \lim_{t \to t_s^+} \tvset{}_\formulaT(t).
    \label{eq:expansion_condition}
\end{equation}
Equivalently, for every \(t_s\in\mathcal{T}_s\),
\[
\config \in \lim_{t \to t_s^-} \tvset{}_\formulaT(t)
\;\Longrightarrow\;
\config \in \lim_{t \to t_s^+} \tvset{}_\formulaT(t).
\]
\end{definition}

Condition~\eqref{eq:expansion_condition} ensures that a configuration that is admissible immediately before a switching time remains admissible immediately after it. Although this condition is not strictly necessary for open-loop planning, it becomes essential when designing a feedback law that renders \(\tvset{}_\formulaT(t)\) forward invariant. In practice, it is enforced through the choice of the parameters \(\bar{\lintermtv}_{\taskID}\), whose selection is discussed in Section~\ref{sec:param_selection}, which regulate the shrinking of the funnels associated with the active tasks.

\begin{remark}\label{remark:nested}
Nested formulas can be treated by combining the constructions developed in this and the next section. In particular, \(\eventuallyab{a}{b}\alwaysab{a'}{b'}\mu^h\) can be interpreted as a time-shifted always requirement, whereas \(\alwaysab{a}{b}\eventuallyab{a'}{b'}\mu^h\) can be viewed as a conjunction of \eventuallytxt{} requirements over suitable intervals, and is therefore handled using the conjunction construction introduced in the next section. A complete treatment is given in \cite[Sec.~IV.C]{marchesini2025sampling}.
\end{remark}

\subsubsection{\underline{Parameter selection and task sequencing}}\label{sec:param_selection}

So far, we have not yet addressed the selection of the parameter \(\tinittask{\taskID}\), which determines when the set associated with \(\formulaNT_\taskID\) becomes active, nor the selection of the slope \(\bar{\kappa}_{\taskID}\), which determines the evolution of the corresponding funnel set.

The choice of \(\tinittask{\taskID}\) is made consistently with the sequential ordering of the groups introduced in Section~\ref{subsec:compos_sets}. In particular, all tasks belonging to the same group \(\mathcal{K}_{\mathrm{ov}}^i\) are assigned the same activation time, denoted by \(\tinittask{i}\), so that they become active simultaneously. 

The value of \(\tinittask{i}\) is selected consistently with the completion of the preceding group \(\mathcal{K}_{\mathrm{ov}}^{i-1}\). For every group \(\mathcal{K}_{\mathrm{ov}}^i\), with \(i>1\), the next group may become active only after all formulas in \(\mathcal{K}_{\mathrm{ov}}^{i-1}\) have completed. Accordingly, we choose
\begin{equation}
\tinittask{i}
=
\max\!\left(
\max_{\ell\in\mathcal{K}_{\mathrm{ov}}^{i-1}\,:\,\formulaNT_\ell\text{ of type }\always} b_\ell,\;
\max_{\ell\in\mathcal{K}_{\mathrm{ov}}^{i-1}\,:\,\formulaNT_\ell\text{ of type }\eventually} a_\ell
\right).
\end{equation}
Thus, all tasks in \(\mathcal{K}_{\mathrm{ov}}^i\) inherit the same activation time \(\tinittask{i}\).

The slope parameters \(\bar{\kappa}_{\taskID}\) are instead chosen so that, at every switching time \(t_s\) of the active-task set \(\mathcal{I}^{\formulaT}(t)\), the inclusion in Definition~\ref{def:setexpansion} holds. Since \(\tvset{}_\formulaT(t)\) is convex at each switching time, this containment condition can be enforced efficiently via convex optimization. For simplicity, we do not detail here the construction of the sets associated with overlapping formulas. Appendix~\ref{app:computingkappa} presents one possible approach for designing \(\bar{\lintermtv}_{\taskID+1}\) for polytopic sets associated with single formulas so that \eqref{eq:expansion_condition} holds, together with possible extensions to more general convex sets. The resulting design problem is closely related to the \emph{polytope containment problem}~\cite{mangasarian2002set,sadraddini2019linear}. See also \cite[Sec.~V.C]{marchesini2025sampling} for a possible design of the parameters for overlapping tasks.


\subsection{Disjunctions to Time-Varying Sets}\label{sec:specifications}

We next consider disjunctions. A formula of the form
$
\formulaOR := \bigvee_{j=1}^{J} \formulaT_{j}
$
represents \(J\) alternative conjunctions, each associated with a set \(\tvset{}_{\formulaT_j}(t)\). Accordingly,
\begin{equation}
    \tvset{}_\formulaOR(t) = \bigcup_{j=1}^{J} \tvset{}_{\formulaT_j}(t).
\end{equation}
Thus, satisfaction of \(\formulaOR\) can be ensured by rendering forward invariant any one of the sets \(\tvset{}_{\formulaT_j}(t)\). In summary, the feasible region associated with the considered STL fragment is the union of the convex regions induced by the elementary formulas and their conjunctions.

\section{Graph Representation of STL Specifications and Collision Free Navigation}\label{sec:graph_repr} 

From Section~\ref{sec:stl2tvsets}, each STL formula \(\psi\) is associated with a collection of convex sets in the joint time--configuration space, including funnel sets, robustified predicate sets, and, when tasks overlap, intersections thereof. A trajectory satisfies \(\psi\) if it traverses these sets according to the logical and temporal structure of the STL formula. This naturally motivates a graph-based abstraction of the planning problem.

To this end, we introduce two graph structures. The first is a \emph{task graph}, which encodes the convex sets induced by the STL specification \(\psi\) as described in Section~\ref{sec:stl2tvsets}, together with the admissible transitions among them dictated by the temporal organization of the tasks (see Fig.~\ref{fig:gsc_stl_example}). The second is a \emph{collision-free graph}, associated with a convex decomposition of the free configuration space, which encodes admissible collision-free transitions between spatial regions. The product of these two graphs then yields a graph whose vertices represent convex spatio-temporal collision-free regions and whose edges encode admissible logical and geometric progressions. This product graph provides the discrete structure underlying the trajectory-planning formulation developed in the next section.


\subsection{Task graph representation}

To encode the logical structure and temporal progression of the STL specification, we organize the convex regions into a directed graph.

\begin{definition}[Task graph]
Given an STL specification \(\psi\), we define a directed \emph{task graph}
\(
\graph_{\text{task}} = (\vertices_{\text{task}}, \edges_{\text{task}}),
\)
where each vertex \(v \in \vertices_{\text{task}}\) is associated with a convex spatio-temporal set
\(
\sttaskset_v \subseteq \Rpluseq \times \mathcal{Q},
\)
equal to either a funnel set, a predicate set, or a nonempty intersection of such sets arising from overlapping formulas in \(\psi\).
\end{definition}


In particular, let us associate with each task \(\formulaNT_\taskID\) the possible labels \(\mathrm{F}\), \(\mathrm{P}\), and \(\mathrm{C}\), corresponding, respectively, to the funnel, predicate, and completed stages. To each label \(\sigma_\taskID\in\{\mathrm{F},\mathrm{P},\mathrm{C}\}\), we associate a set \(\sttaskset_{\taskID}^{\sigma_\taskID}\), where the labels \(\mathrm{F}\) and \(\mathrm{P}\) correspond to the sets \eqref{eq:funnelset} and \eqref{eq:predicateset}, respectively. The completed label is associated with \(\sttaskset_{\taskID}^{\mathrm{C}} = [t_{\taskID}^{\mathrm{comp}},+\infty)\times \nR{\configdim}\), with \(t_{\taskID}^{\mathrm{comp}}=a_\taskID\) for \(\formulaNT_\taskID=\eventuallyab{a_\taskID}{b_\taskID}\mu^h\), and \(t_{\taskID}^{\mathrm{comp}}=b_\taskID\) for \(\formulaNT_\taskID=\alwaysab{a_\taskID}{b_\taskID}\mu^h\). Thus, \(\sttaskset_{\taskID}^{\mathrm{C}}\) represents the absence of any further constraint associated with task \(\taskID\) after its completion. However, when \(\formulaNT_\taskID\) does not overlap in time with any other task, no explicit completion vertex is required in the task graph. The task is represented by the funnel and predicate vertices, associated with the sets \(\funnelset{\taskID}\) and \(\predicateset{\taskID}\), respectively, while its completion is encoded directly by the outgoing transitions from the predicate vertex.

For each group \(\mathcal{K}_{\mathrm{ov}}^i\), let \(\vertices_{\mathrm{task}}^i \subseteq \vertices_{\mathrm{task}}\) denote the subset of task-graph vertices associated with \(\mathcal{K}_{\mathrm{ov}}^i\). In particular, this also includes the case in which \(\mathcal{K}_{\mathrm{ov}}^i\) is a singleton. Each vertex \(v \in \vertices_{\mathrm{task}}^i\) is associated with a tuple of local labels
$
\sigma(v)=\bigl(\sigma_\taskID(v)\bigr)_{\taskID\in\mathcal{K}_{\mathrm{ov}}^i},
$
and with the corresponding set
$
\sttaskset_v
=
\bigcap_{\taskID\in\mathcal{K}_{\mathrm{ov}}^i}
\mathcal{B}_{\taskID}^{\sigma_\taskID(v)}.
$
Vertices are introduced only for nonempty intersections.

A directed edge \((v,w)\in\edges_{\mathrm{task}}\), with \(v\neq w\), is introduced if the corresponding transition is consistent with the syntax of the STL formula. More precisely, \((v,w)\in\edges_{\mathrm{task}}\) if at least one of the following holds:
\begin{enumerate}
    \item there exists a group \(\mathcal{K}_{\mathrm{ov}}^i\) such that \(v,w \in \vertices_{\mathrm{task}}^i\), \(\sttaskset_v \cap \sttaskset_w \neq \emptyset\), and for every \(\taskID\in\mathcal{K}_{\mathrm{ov}}^i\), the local label evolves by at most one logical step, namely
    \(
    \sigma_\taskID(w)\in\Delta_\taskID\bigl(\sigma_\taskID(v)\bigr),
    \)
    where \(\Delta_\taskID(\mathrm{F})=\{\mathrm{F},\mathrm{P}\}\), \(\Delta_\taskID(\mathrm{P})=\{\mathrm{P},\mathrm{C}\}\), and \(\Delta_\taskID(\mathrm{C})=\{\mathrm{C}\}\);
    \item there exist two consecutive groups \(\mathcal{K}_{\mathrm{ov}}^i\) and \(\mathcal{K}_{\mathrm{ov}}^{i+1}\) such that \(v \in \vertices_{\mathrm{task}}^i\), \(w \in \vertices_{\mathrm{task}}^{i+1}\), \(\sttaskset_v \cap \sttaskset_w \neq \emptyset\), and \(v\) corresponds to completion of all formulas in \(\mathcal{K}_{\mathrm{ov}}^i\).
\end{enumerate}

The task graph also includes a designated source vertex \(\sourceV\), associated with the singleton set \(\{(\tinit,\config_0)\}\), and a target vertex \(\targetV\), which is a dummy vertex associated with completion of the formula. Both are treated as singleton sets in the spatio-temporal space.



\subsection{Collision-free graph}

To encode obstacle avoidance, we assume that the free configuration space \(\freespace\), defined in \eqref{eq:freespace}, is decomposed into finitely many convex regions \(\{\colfreeReg{\ell}\}_{\freeregID=1}^{\freeregnum}\). Such decompositions can be obtained, for instance, via region-inflation or polyhedral-approximation methods \cite{dai2024certified,wang2025fast,werner2025superfast}.
These regions are organized into an undirected graph
\[
\graph_{\text{free}} = (\vertices_{\text{free}}, \edges_{\text{free}}),
\]
where each vertex \(w \in \vertices_{\text{free}}\) is associated with a convex collision-free region \(\colfreeReg{w} \subseteq \mathcal{Q}\), and an edge \((w,w') \in \edges_{\text{free}}\) indicates that the corresponding regions are adjacent and can be traversed without collision. Since the task graph is defined in the joint time--configuration space, each collision-free region is lifted to the spatio-temporal domain as
$$
\spcolfreeReg{w} := \Rpluseq \times \colfreeReg{w},
$$
thereby treating time as an unconstrained coordinate.


\subsection{Product graph under obstacle constraints}

By combining the task graph and the collision-free graph we obtain the product graph
\[
\graph_{\text{prod}} = (\vertices_{\text{prod}}, \edges_{\text{prod}}).
\]
where \(\vertices_{\text{prod}} \subseteq \vertices_{\text{task}} \times \vertices_{\text{free}}\). As defined below, the vertices of \(\graph_{\text{prod}}\) encode convex spatio-temporal regions that are both admissible with respect to the STL specification and contained in the collision-free space, while the edges encode admissible transitions between such regions.

\paragraph*{\underline{Product vertices}}
Let \(u \in \vertices_{\text{task}}\) be a task-graph vertex with associated convex set \(\sttaskset_u \subseteq \Rpluseq \times \configSet\), and let \(w \in \vertices_{\text{free}}\) be a collision-free-graph vertex with associated lifted collision-free region \(\spcolfreeReg{w} \subseteq \Rpluseq \times \configSet\). We then define the product-graph vertex \(v=(u,w)\) with associated set
\[
\gcsset_v \define \sttaskset_u \cap \spcolfreeReg{w}.
\]
The vertex \(v\) is included in \(\vertices_{\text{prod}}\) if and only if \(\gcsset_v \neq \emptyset\). Since both \(\sttaskset_u\) and \(\spcolfreeReg{w}\) are convex, the set \(\gcsset_v\) is convex by construction.
%

\paragraph*{\underline{Product edges}}
Let \(v=(u,w)\) and \(v'=(u',w')\) be two product vertices. We add the directed edge \((v,v')\in\edges_{\text{prod}}\) if the following conditions are mutually satisfied:
\begin{align}
    &u'=u \ \text{or}\ (u,u')\in\edges_{\text{task}},
    \label{eq:prod_edge_task}
    \\
    &w'=w \ \text{or}\ (w,w')\in\edges_{\text{free}},
    \label{eq:prod_edge_free}
    \\
    &\neg(u'=u \ \wedge\ w'=w),
    \label{eq:prod_edge_nobothself}
    \\
    &\gcsset_v \cap \gcsset_{v'} \neq \emptyset .
    \label{eq:prod_edge_setintersection}
\end{align}
Conditions~\eqref{eq:prod_edge_task}--\eqref{eq:prod_edge_nobothself} state that a transition is allowed only if it is admissible in the task graph, in the collision-free graph, or in both, excluding the trivial case in which neither component changes. Condition~\eqref{eq:prod_edge_setintersection} requires the two product vertices to be compatible in the joint time--configuration space. This ensures that continuity constraints between consecutive trajectory segments can be imposed meaningfully in the subsequent optimization.
A path in \(\graph_{\text{prod}}\) therefore identifies a collision-free sequence of admissible spatio-temporal regions satisfying the STL specification.


\begin{remark}
The complexity of the product-graph construction scales with the number of task vertices and the number of collision-free regions. In the worst case, the number of vertices grows as
$
\mathcal{O}\!\left(|\vertices_{\text{task}}|\,|\vertices_{\text{free}}|\right),
$
corresponding to the full Cartesian product of the two graphs, and the number of edges grows accordingly. In practice, this bound is conservative: many task-phase sets do not intersect many collision-free regions, and many candidate transitions are eliminated by empty intersections 
either at vertex level or at the edge level. The resulting product graph is therefore typically much sparser than the worst-case bound suggests.
\end{remark}


%% file: Sections/stl_gcs.tex
\section{Trajectory planning via graphs of convex sets}\label{sec:traj_plan}

The graph constructions introduced in Sect.~\ref{sec:graph_repr} provide a discrete representation of the admissible logical and geometric evolutions of the system. We now endow this graph structure with continuous decision variables and convex constraints, thereby obtaining a GCS formulation of the planning problem. This allows us to cast STL-constrained trajectory planning with obstacle avoidance as a shortest-path problem over convex spatio-temporal regions. We proceed in three steps. First, we formulate the trajectory-planning problem in its infinite-dimensional form, over the space of admissible trajectories. Next, we introduce a B-spline parameterization to obtain a finite-dimensional approximation. Finally, we cast the resulting problem as a shortest-path problem over a graph of convex sets.

\subsection{Trajectory planning problem}

We seek a trajectory satisfying the STL specification
\(
\formulaOR = \bigvee_{\branchidx=1}^{\branchnum} \formulaT_{\branchidx}
\)
while avoiding collisions with obstacles and optimizing a performance criterion. Let \(\mathcal{D}_q \subseteq \mathbb{R}^{\configdim}\) denote the admissible set of configuration velocities, assumed for simplicity to be a box constraint $\mathcal{D}_q:= \{ \|\dot{\config}\|_{\infty} \leq \dot{q}_{max} \}$. Following~\cite{marcucci2023motion}, we introduce a path coordinate \(s\in[0,S]\), where \(S>0\) is a fixed normalization constant (e.g., \(S=1\)), and a strictly increasing time law \(t=p(s)\), satisfying \(p(0)=\tinit\) and \(p(S)=\tfinal\). Defining the spatial trajectory
\(
\spacelaw(s)\define \config(p(s)),
\)
that is, the configuration trajectory expressed in the path coordinate \(s\). The planning problem is then formulated as
\begin{subequations}\label{eq:optimization2}
\begin{align}
    \min_{p(\cdot),\,\spacelaw(\cdot)}
    \quad
    & w_1 \tfinal + w_2 \totallenght(\spacelaw,\tfinal) 
    \label{eq:optimization2_a}
    \\
    \text{s.t.}\quad
    & \spacelaw \circ p^{-1} \in \mathcal{C}^{\flatoutdegree},
    \label{eq:optimization2_b}
    \\
    & \spacelaw(0) = \config_0,
    \label{eq:optimization2_c}
    \\
    & \dot{\spacelaw}(0) = \dot{p}(0)\dot{\config}_0,
    \label{eq:optimization2_d}
    \\
    & \dot{\spacelaw}(s) \in \dot{p}(s)\mathcal{D}_q,
    \qquad \forall s \in [0,S],
    \label{eq:optimization2_e}
    \\
    & \dot{p}(s) > 0,
    \qquad \forall s \in [0,S],
    \label{eq:optimization2_ep}
    \\
    & \tfinal = p(S),
    \label{eq:optimization2_f}
    \\
    & \begin{bmatrix}
        p(s)\\
        \spacelaw(s)
    \end{bmatrix}
    \in
    \bigcup_{v \in \vertices_{\text{prod}}} \gcsset_v
    \qquad \forall s \in [0,S],
    \label{eq:optimization2_g}
\end{align}
\end{subequations}
Here \(\totallenght(\spacelaw,\tfinal)\define \int_{\tinit}^{\tfinal}\norm{\dot{\config}(\tau)}\,d\tau\), and \(w_1,w_2>0\) weight the two objective terms: final time and trajectory length. 
Constraint~\eqref{eq:optimization2_b} imposes the required smoothness of the resulting trajectory in physical time, while \eqref{eq:optimization2_c}--\eqref{eq:optimization2_d} enforce the initial position and velocity conditions. Constraint~\eqref{eq:optimization2_e} encodes the velocity bound in the path-parameterized coordinates and follows from the chain rule,
\(\dot{\config}(t)={\dot{\spacelaw}(s)}/{\dot{p}(s)}\), whereas \eqref{eq:optimization2_ep} guarantees that the time law is strictly increasing. Constraint~\eqref{eq:optimization2_f} defines the final time, and \eqref{eq:optimization2_g} requires the trajectory to evolve in the union of the robustified convex spatio-temporal regions associated with the product graph and, together with continuity of the curve, ensures satisfaction of the STL formula. 
Formulation~\eqref{eq:optimization2} is directly posed in the joint time--configuration space and is therefore compatible with the product-graph construction.


\subsection{B-spline parameterization}

Problem~\eqref{eq:optimization2} is infinite-dimensional, since the decision variables are the functions \(p(\cdot)\) and \(\spacelaw(\cdot)\). We therefore parameterize both using B-splines \cite[Sec.~4.5]{biagiotti2008trajectory}, and choose the spline degree \(\flatoutdegree\) consistently with the continuity required by the planning model. Given control points \(\gamma_{\ctrlpointidx}\), a B-spline curve is written as
\begin{equation}
    \gamma(s)=\sum_{\ctrlpointidx=0}^{\numctrlpoints}
    N_{\ctrlpointidx,\flatoutdegree}(s)\,\gamma_{\ctrlpointidx},
    \label{eq:bspline_def}
\end{equation}
where \(N_{\ctrlpointidx,\flatoutdegree}\) are the corresponding B-spline basis functions. B-splines are particularly convenient here because: 
i) the curve lies in the convex hull of neighboring control points, allowing set-containment constraints to be enforced through the control points; ii) derivatives are again B-splines of lower degree, with derivative control points depending affinely on the original ones; iii) 
initial and final conditions can be enforced directly through the endpoint control points; iv) multiple spline segments can be concatenated with prescribed continuity enforced through linear constraints on the control points.

\subsection{Trajectory optimization over the GCS}

We now formulate problem~\eqref{eq:optimization2} as an instance of the GCS problem~\eqref{eq:gcs_spp} using the B-spline parameterization above. A feasible solution consists of a path on the graph from \(\sourceV\) to \(\targetV\) and a continuous spatio-temporal trajectory obtained by concatenating one B-spline segment per visited vertex, each contained in the corresponding set \(\gcsset_v\).

\paragraph*{\underline{Vertex variables and constraints}}
To each visited vertex \(v\), we associate the spatio-temporal spline
\begin{equation}
    \spatiotemp_{v}(s)
    =
    \begin{bmatrix}
        p_{v}(s)\\
        \spacelaw_{v}(s)
    \end{bmatrix}
    =
    \sum_{\ctrlpointidx=0}^{\numctrlpoints}
    N_{\ctrlpointidx,\flatoutdegree}(s)\,\spatiotemp_{{\ctrlpointidx},v},
\end{equation}
where
$
\spatiotemp_{{\ctrlpointidx},v}
=
(
p_{{\ctrlpointidx},v},
\spacelaw_{{\ctrlpointidx},v})
 \in \Rpluseq \times \configSet$ are the associated control points. 
 We collect them in the stack 
\( 
 \boldsymbol{\zeta}_v \define \operatorname{col} \left( \spatiotemp_{0,v}, \ldots, \spatiotemp_{\numctrlpoints,v} \right). 
\)
Containment of the full segment in \(\gcsset_v
\) is enforced through
\begin{equation}
    \spatiotemp_{{\ctrlpointidx},v} \in \gcsset_v
    ,
    \qquad \forall \ctrlpointidx \in \{0,\dots,\numctrlpoints\},
    \label{eq:vertex_set_membership}
\end{equation}
which is sufficient by the convex-hull property of B-splines.

Velocity constraints are imposed through the derivative control points:
\begin{equation}
    \dot{p}_{{\ctrlpointidx},v} > \dot{p}_{\min} > 0,
    \qquad \forall \ctrlpointidx \in \{0,\dots,\numctrlpoints\},
    \label{eq:time_monotonicity}
\end{equation}
\begin{equation}
    \dot{\spacelaw}_{{\ctrlpointidx},v}
    \in
    \dot{p}_{{\ctrlpointidx},v}\mathcal{D}_q,
    \qquad \forall \ctrlpointidx \in \{0,\dots,\numctrlpoints\},
    \label{eq:velocity_constraint_ctrlpts}
\end{equation}
which are convex because derivative control points depend affinely on the original ones (see e.g.~\cite{csomay2022multi}). Collectively, \eqref{eq:vertex_set_membership}--\eqref{eq:velocity_constraint_ctrlpts} define the vertex constraint set \(\bar{\Xi}_v \in (\Rpluseq \times \configSet)^{J+1}\) for \(\boldsymbol{\zeta}_v\).

\paragraph*{\underline{Edge constraints}}
Each edge \(e=(u,v)\) represents a transition between two consecutive spline segments. Continuity is imposed directly on the control points of the spline and of its derivatives. Let
\(
    \spatiotemp^{[l]}_{\ctrlpointidx,v},
\), $\ctrlpointidx=0,\dots,\numctrlpoints-l$,
denote the control points of the \(l\)-th derivative spline associated with vertex \(v\), with
\(
    \spatiotemp^{[0]}_{\ctrlpointidx,v}
    \define
    \spatiotemp_{\ctrlpointidx,v}.
\)
For a (clamped \cite{biagiotti2008trajectory}) B-spline, equality of the endpoint values of the \(l\)-th derivatives is enforced by equating the last derivative control point of the segment associated with \(u\) with the first derivative control point of the segment associated with \(v\). Thus, for each edge \(e=(u,v)\), we impose
\begin{equation}
    \spatiotemp^{[l]}_{\numctrlpoints-l,u}
    =
    \spatiotemp^{[l]}_{0,v},
    \qquad
    l=0,\dots,\flatoutdegree.
    \label{eq:edge_continuity}
\end{equation}
The case \(l=0\) enforces continuity of the spatio-temporal trajectory, while \(l\geq 1\) enforces continuity of its derivatives up to order \(\flatoutdegree\).
Since derivative control points depend affinely on the original control points, \eqref{eq:edge_continuity} defines linear equality constraints on the spline control points.

\paragraph*{\underline{Optimization problem}}

The final time is
\(
    \tfinal
    =
    p_{\numctrlpoints,\targetV},
\)
and the length of each spline segment is upper-bounded, see e.g.~\cite[Sec.~5.4]{marcucci2023motion}, by
\begin{equation}
    \ell_v(\boldsymbol{\zeta}_v)
    =
    \sum_{\ctrlpointidx=0}^{\numctrlpoints-1}
    \left\|
        \spacelaw_{\ctrlpointidx+1,v}
        -
        \spacelaw_{\ctrlpointidx,v}
    \right\|.
    \label{eq:gcs_segment_length_bound}
\end{equation}
The resulting trajectory-planning problem is
\begin{subequations}
\label{eq:spp_instantiated}
\begin{align}
    \min_{\substack{
        \path\in\pathsSet\\
        \boldsymbol{\zeta}_v,\, v\in\vertices_{\mathrm{prod}}
    }}
    \quad
    &
    w_1 \tfinal
    +
    w_2 \sum_{v\in\path} \ell_v(\boldsymbol{\zeta}_v)
    \label{eq:spp_instantiated_objective}
    \\
    \text{s.t.}\quad
    &
    \boldsymbol{\zeta}_v
    \in
    \bar{\Xi}_v,
    \qquad \qquad
    \forall v\in\path,
    \label{eq:spp_instantiated_vertex_constraints}
    \\
    &
    (\boldsymbol{\zeta}_u,\boldsymbol{\zeta}_v)
    \in
    \bar{\Xi}_e,
    \qquad
    \forall e=(u,v)\in\edges_{\path}.
    \label{eq:spp_instantiated_edge_constraints}
\end{align}
\end{subequations}
This is a mixed-integer convex optimization problem. Although NP-hard, it admits a tight convex relaxation based on perspective functions~\cite{marcucci2024shortest}. Following~\cite{marcucci2023motion}, rather than solving \eqref{eq:spp_instantiated} exactly via branch-and-bound, we solve the convex relaxation and then apply a computationally inexpensive rounding procedure, repeated multiple times to improve the quality of the resulting approximate solution.

%% file: Sections/stl_control.tex
\section{Control: STL satisfaction by forward invariance}\label{sec:control}

The planning layer computes a nominal reference trajectory \(\planState_d(t)\) in the planning coordinates \eqref{eq:planning_coordinates}, together with the corresponding sequence of STL-induced time-varying sets \(\tvset{}_v(t)\) for all \(v \in \path\), corresponding to one of the disjunctive branches $\formulaT_\branchidx$. This trajectory is then mapped back to the original state space, yielding the reference \(\state_d(t)\) for the nonlinear system \eqref{eq:nlsys}. In practice, exact execution of the nominal plan cannot be expected because of actuator limits, model mismatch, and disturbances. For this reason, and because the objective is to preserve safety and STL satisfaction rather than merely reduce tracking error, we complement the planning layer with a low-level feedback controller acting on the original nonlinear dynamics \eqref{eq:nlsys}. The controller is designed to render forward invariant, under the closed-loop dynamics, the intersection of the currently active STL-induced set \(\tvset{}_v(t)\) with the collision-free space \(\freespace\) defined in \eqref{eq:freespace}. Within this architecture, trajectory tracking is promoted through the nominal input, while set invariance is enforced (with minimum violation) through CBFs.

\subsection{Control Barrier Functions}

We introduce control barrier functions for the STL-induced time-varying sets and for obstacle-avoidance constraints. In the proposed framework, the relevant constraint functions are generally \emph{time-varying}, \emph{switching}, and \emph{spatially nonsmooth}. They are time-varying because the STL-induced sets depend explicitly on time, switching because the active task set \(\mathcal{I}^{\formulaT}(t)\) defined in \eqref{eq:active_set} may change at discrete instants, and spatially nonsmooth because Boolean set operations are represented through pointwise min and max operators. In particular, intersections of simultaneously active STL-induced sets are encoded through pointwise minimum operators, whereas obstacle avoidance at the control level is encoded through pointwise maximum operators.

\subsubsection{Time-varying nonsmooth control barrier functions}

We now consider barrier functions that are time-varying and nonsmooth with respect to the state. Let \(g_i:\mathbb{R}^{\xdim}\times\mathbb{R}_{\geq 0}\to\mathbb{R}\), \(i\in\rangen{n}\), be continuously differentiable functions, and define
\begin{equation}
    g(\state,t)
    =
    \operatorname{op}_{i\in\rangen{n}} g_i(\state,t),
    \qquad
    \operatorname{op}\in\{\min,\max\}.
    \label{eq:tv_boolean_barrier_function}
\end{equation}
and
\begin{equation}
    \mathcal{C}(t)
    =
    \{\state\in\stateSet \mid g(\state,t)\geq 0\},
    \label{eq:tv_nonsmooth_cbf_set}
\end{equation}
The case \(\operatorname{op}=\min\) encodes intersections of sets, as for simultaneous STL-induced constraints. The case \(\operatorname{op}=\max\) encodes unions of sets, and is used for obstacle avoidance when the safe set is described as the complement of an obstacle represented by multiple smooth inequalities.
Since the nonsmoothness considered here is spatial, the Clarke generalized gradient \cite{cortes2008discontinuous} is taken only with respect to \(\state\), while time is treated as a parameter. We denote this generalized gradient by
\(
    \partial_x g(\state,t).
\)
%

Define the active-index set
\(
    \mathcal{A}(\state,t)
    \define
    \{i\in\rangen{n}\mid g_i(\state,t)=g(\state,t)\}.
\)
For \(g\) defined as in \eqref{eq:tv_boolean_barrier_function}, the Clarke generalized gradient with respect to \(\state\) is obtained from the active smooth components. In particular,
\(
    \partial_x g(\state,t)
    \subseteq
    \operatorname{co}
    \{\nabla_x g_i(\state,t)\mid i\in\mathcal{A}(\state,t)\}.
\)
Therefore, a conservative and implementation-friendly nonsmooth CBF condition is obtained by enforcing the time-varying CBF inequality on all active smooth components.

\begin{definition}[Time-varying Boolean Nonsmooth Control Barrier Function]\label{def:cbfbooleantv}
The function \(g\) in \eqref{eq:tv_boolean_barrier_function} is a \emph{time-varying Boolean Nonsmooth Control Barrier Function} for system \eqref{eq:nlsys} on an open set \(\mathcal{D}\subseteq\stateSet\) containing \(\mathcal{C}(t)\) if there exists a locally Lipschitz extended class-\(\classK\) function \(\alpha\) such that, for every \((\state,t)\in\mathcal{D}\times\mathbb{R}_{\geq 0}\) with \(\state\in\mathcal{C}(t)\), there exists \(\sysinput\in\inputSet\), which, $\forall i\in\mathcal{A}(\state,t)$, satisfies
\begin{equation}
\begin{aligned}
    \frac{\partial g_i (\state,t)}{\partial \state}
    \bigl(\ff(\state)+\gG(\state)\sysinput\bigr)
    +
    \frac{\partial g_i (\state,t)}{\partial t}
    \geq
    -\alpha\bigl(g(\state,t)\bigr).
    \label{eq:tv_boolean_nonsmooth_cbf_condition}
\end{aligned}
\end{equation}
\end{definition}

When \(g\) is continuously differentiable, the active-index set is a singleton and \eqref{eq:tv_boolean_nonsmooth_cbf_condition} reduces to the standard smooth time-varying CBF condition
\begin{equation}
    \frac{\partial g (\state,t)}{\partial \state}
    \bigl(\ff(\state)+\gG(\state)\sysinput\bigr)
    +
    \frac{\partial g (\state,t)}{\partial t}
    \geq
    -\alpha\bigl(g(\state,t)\bigr).
    \label{eq:smooth_tv_cbf_recovered}
\end{equation}
If a measurable and locally bounded feedback law satisfies \eqref{eq:tv_boolean_nonsmooth_cbf_condition}, then \(\mathcal{C}\) is forward invariant~\cite{glotfelter2020nonsmooth,mestres2025safe}.


\subsubsection{Switching barrier functions}

Let
\(
0 \leq t_1 < \cdots < t_N
\)
denote the switching instants. On each interval \([t_s,t_{s+1})\), \(s\in\{0,\ldots,N\}\), the active task set is fixed, and the corresponding function \(g\) is treated as a time-varying nonsmooth barrier function of the form introduced in \eqref{eq:tv_boolean_barrier_function}. At a switching instant \(t_s\), define the left and right limits
\begin{equation}
g_s^-(\state)
\define
\lim_{t\to t_s^-} g(\state,t),
\qquad
g_s^+(\state)
\define
\lim_{t\to t_s^+} g(\state,t).
\end{equation}
To ensure that membership in the STL-induced set is preserved across switches, we impose the following condition
\begin{equation}
g_s^-(\state)\geq 0
\;\Longrightarrow\;
g_s^+(\state)\geq 0,
\qquad
\forall \state\in\stateSet,\;\forall s\in\rangen{N}.
\label{eq:expansivitycbf}
\end{equation}
Equivalently, denoting by
\[
\begin{aligned}
    \mathcal{C}(t_s^-)
=
\{\state\in\stateSet\mid g_s^-(\state)\geq 0\},
\\
\mathcal{C}(t_s^+)
=
\{\state\in\stateSet\mid g_s^+(\state)\geq 0\},
\end{aligned}
\]
condition \eqref{eq:expansivitycbf} can be written as the set-inclusion condition
\begin{equation}
\mathcal{C}(t_s^-)
\subseteq
\mathcal{C}(t_s^+),
\qquad
\forall i\in\rangen{N}.
\label{eq:expansivity_set}
\end{equation}
Thus, any state that belongs to the set immediately before a switch also belongs to the set immediately after the switch. In the STL-induced construction, this is related to the set-expansion condition imposed at changes of the active task set in Def.~\ref{def:setexpansion}.

\begin{proposition}[Forward invariance under switching]
\label{prop:forward_invariance}
Let \(g\) be piecewise defined over the intervals \([t_s,t_{s+1})\), \(s\in\{0,\ldots,N\}\). Suppose that:
\begin{enumerate}
    \item on each interval \([t_s,t_{s+1})\), there exists a feedback control law satisfying the CBF condition \eqref{eq:tv_boolean_nonsmooth_cbf_condition};
    \item at each switching instant \(t_s\), the set-inclusion condition \eqref{eq:expansivity_set} holds.
\end{enumerate}
Then \(\mathcal{C}(t)\) is forward invariant.
\end{proposition}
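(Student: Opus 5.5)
The plan is an induction over the switching intervals. The argument glues together the per-interval invariance guaranteed by the nonsmooth CBF result quoted after Definition~\ref{def:cbfbooleantv} (a measurable, locally bounded feedback satisfying \eqref{eq:tv_boolean_nonsmooth_cbf_condition} renders \(\mathcal{C}\) forward invariant) with the set-inclusion condition \eqref{eq:expansivity_set} at the switches. I take \(\state(t_0)\in\mathcal{C}(t_0)\) with \(t_0\) the initial time, and let \(\sysinput\) be the concatenation of the feedback laws given by hypothesis~1 on each interval. Solutions are understood in the Carathéodory/Filippov sense, so the closed-loop trajectory \(\state(\cdot)\) is absolutely continuous across the switching instants.

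\textbf{Inductive claim.} For each \(s\), if \(\state(t_s)\in\mathcal{C}(t_s^+)\), then \(\state(t)\in\mathcal{C}(t)\) for all \(t\in[t_s,t_{s+1})\).

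\textbf{Step 1 (within an interval).} On \([t_s,t_{s+1})\) the active set is fixed, so \(g\) has the form \eqref{eq:tv_boolean_barrier_function} with a fixed family \(\{g_i\}\). Hypothesis~1 together with the quoted result (\cite{glotfelter2020nonsmooth,mestres2025safe}) gives invariance on this half-open interval. To be self-contained I would sketch why. Along the solution, \(t\mapsto g(\state(t),t)\) is absolutely continuous, since it is a min/max of \(\mathcal{C}^1\) functions. At almost every \(t\), its derivative equals \(\tfrac{d}{dt}g_i(\state(t),t)\) for some active index \(i\in\mathcal{A}(\state(t),t)\); this follows from the standard fact that, for a pointwise min/max of smooth functions, the derivative along an absolutely continuous curve coincides with that of an active component. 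Condition \eqref{eq:tv_boolean_nonsmooth_cbf_condition} therefore yields \(\dot g\ge-\alpha(g)\) a.e. A comparison lemma with initial value \(g(\state(t_s),t_s^+)\ge0\) then gives \(g\ge0\) on the interval.

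\textbf{Step 2 (across a switch).} By continuity of \(\state(\cdot)\) and of \(g\) on the closure of each interval (its one-sided limits exist by the definition of \(g_s^\pm\)), letting \(t\to t_{s+1}^-\) in Step~1 gives \(g_{s+1}^-(\state(t_{s+1}))\ge0\), i.e. \(\state(t_{s+1})\in\mathcal{C}(t_{s+1}^-)\). Hypothesis~2 then gives \(\state(t_{s+1})\in\mathcal{C}(t_{s+1}^+)\). This is exactly the initial condition needed to apply Step~1 on the next interval.

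Induction over the finitely many switches \(s=0,\dots,N\), with the last interval \([t_N,\infty)\) treated as in Step~1, then gives \(\state(t)\in\mathcal{C}(t)\) for all \(t\ge t_0\).

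\textbf{Main obstacle.} I expect the main difficulty to be the regularity in Step~2, namely justifying the passage to the limit \(t\to t_{s+1}^-\). This requires that \(g_i\) extend continuously to the closed interval, which I would assume because each \(g_i\) is \(\mathcal{C}^1\) in \((\state,t)\) and \(g\) is only discontinuous in \(t\) at the switching instants. It also requires that the concatenated feedback produce a solution without finite escape. I would handle escape by noting that the hypothesis is stated in terms of existence of a well-posed feedback on each interval, so solutions exist on each full interval.
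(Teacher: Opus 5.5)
Your proposal is correct and follows the same argument as the paper: per-interval forward invariance from the standard (nonsmooth) time-varying CBF result, then the inclusion $\mathcal{C}(t_s^-)\subseteq\mathcal{C}(t_s^+)$ to restart the argument at each switch, repeated over the finitely many intervals. Your Step~2 also makes explicit a point the paper leaves implicit: continuity of $\state(\cdot)$ and joint continuity of the active $g_i$ give $\state(t_{s+1})\in\mathcal{C}(t_{s+1}^-)$ before the inclusion is applied.
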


\begin{proof}
Forward invariance on each interval in between two switches follows from the standard time-varying CBF argument ~\cite{glotfelter2020nonsmooth,mestres2025safe}. At each switching instant, \eqref{eq:expansivity_set} guarantees that any state in \(\mathcal{C}(t_s^-)\) also belongs to \(\mathcal{C}(t_s^+)\), providing a valid initial condition for the next interval. Repeating this argument over successive intervals proves the result.
\end{proof}



\subsection{Low-level control architecture}\label{sec:lowlevelcontrol}

We adopt a \emph{nominal tracking plus safety filter} architecture \cite{ames2016control}. A nominal input \(\sysinput_{\mathrm{nom}}\) is first generated by a trajectory-tracking controller. This input is then modified online by a quadratic program that enforces obstacle avoidance and forward invariance of the active STL-induced set while remaining as close as possible to the nominal command.

Let the active STL set at time $t$, expressed in the original state coordinates $\state$, be
\begin{equation}
    \taskset_{x,\formulaT_\branchidx}(t)
    \define
    \bigl\{\,
        \state \in \stateSet
        \;\big|\;
        \stlcbf{}_{x,\taskID}(\state, t) \ge 0,
        \;
        \forall \taskID\in \mathcal{I}^{\formulaT_\branchidx}(t)
    \,\bigr\},
    \label{eq:active_stl_set}
\end{equation}
where $\stlcbf{}_{x,\taskID}(\state, t) \define \stlcbf{}_\taskID(\config(\state),t)$
and $\formulaT_\branchidx$ is the plan selected by the solution
of~\eqref{eq:spp_instantiated}. The obstacle regions in state coordinates are
\begin{equation}
    \obsReg_{x,\obsID}
    \define
    \bigl\{\state \in \stateSet \mid \obsfun{x,\obsID}(\state) < 0\bigr\},
    \qquad
    \obsfun{x,\obsID}(\state) \define \obsfun{\obsID}(\config(\state)).
    \label{eq:obs_state}
\end{equation}

The functions \(\stlcbf{}_{\state,\taskID}\) and \(\obsfun{\state,\obsID}\) depend on the full state only through the configuration map \(\config(\state)\). Hence, their relative degree with respect to \eqref{eq:nlsys} is inherited from the planning model, namely from the number of integrators between the input and the configuration. We therefore work directly with relative-degree-one barrier functions, obtained either from the original constraints when their relative degree is one, or through High-Order CBFs \cite{tan2021high,xiao2021high} or backstepping CBFs \cite{taylor2022safe,cohen2024safety} otherwise.

Let \(g^b(\state,t)\) denote the Boolean STL-induced barrier function enforced by the safety filter at time \(t\), written as
\(
    g^b(\state,t)
    =
    \min_{k\in\rangen{N_b}} g^b_k(\state,t),
\)
where \(g^b_k(\state,t)\) are the relative-degree-one STL-induced barrier functions associated with the active STL constraints. The corresponding active-index set is
\(
    \mathcal{A}_b(\state,t)
    \define
    \{\,k\in\rangen{N_b}\mid g^b_k(\state,t)=g^b(\state,t)\,\}.
\)
Similarly, for each obstacle \(l\in\rangen{N_o}\) included in the safety filter, let
\(
    g^o_l(\state)
    =
    \max_{i\in\rangen{N_{o,l}}} g^o_{l,i}(\state)
\)
denote the Boolean nonsmooth barrier function encoding avoidance of the \(l\)-th obstacle, where \(g^o_{l,i}(\state)\) are smooth relative-degree-one obstacle-avoidance barrier functions. The corresponding active-index set is
\(
    \mathcal{A}_{o,l}(\state)
    \define
    \{\,i\in\rangen{N_{o,l}}\mid g^o_{l,i}(\state)=g^o_l(\state)\,\}.
\)
In practice, only obstacles within a suitable neighborhood of the robot are included in the safety filter. The STL-induced barriers are piecewise defined in time and inherit the switching instants of the active STL-induced set, which are handled through Proposition~\ref{prop:forward_invariance}. The obstacle-avoidance barriers are time-invariant in the considered setting, but may be Boolean nonsmooth because each obstacle-free set is represented through a pointwise maximum of smooth functions.

\subsubsection{Safety filter}

At each time instant, the control input is obtained as the solution of the quadratic program
\begin{equation}
\begin{aligned}
    \min_{\sysinput,\,
    \boldsymbol{s}}
    \quad &
    \|\sysinput - \sysinput_{\mathrm{nom}}\|_2^2
    + w_s \|\boldsymbol{s}\|_2^2
    \\
    \text{s.t.}
    &
    L_{\ff}g^b_k + L_{\gG}g^b_k\sysinput + \partial_t g^b_k
    \geq -\alpha_k\bigl(g^b_k\bigr) - s_k,
    \\
    &\qquad \forall k \in \mathcal{A}_b(\state,t),
    \\
    &
    \nabla g^o_{l,i}{}^\top\bigl(\ff(\state)+\gG(\state)\sysinput\bigr)
    \geq
    -\alpha_l\bigl(g^o_l(\state)\bigr),
    \\
    &\qquad
    \forall i \in \mathcal{A}_{o,l}(\state),\;\; l \in \rangen{n_o},
    \\
    &
    \sysinput \in \inputSet,
    \qquad
    \boldsymbol{s} \geq \boldsymbol{0},
\end{aligned}
\label{eq:cbf_clf_qp}
\end{equation}
where 
\(\alpha_k\) and \(\alpha_l\) are extended class-\(\mathcal{K}\) functions, 
\(\boldsymbol{s}\in\mathbb{R}^{N_b}\) is a vector of slack variables for the STL barrier constraints used to provide a minimum violation of the STL condition with weight $w_s\gg0$. 

For relative-degree-one constraints, the configuration-level set-expansion condition~\eqref{eq:expansion_condition} directly implies the switching compatibility condition required by Proposition~\ref{prop:forward_invariance}. For higher-relative-degree constraints, this implication is no longer automatic. To see this, let \(b(\config,t)\) denote a configuration-level STL-induced barrier and consider, for a relative degree \(\reldeg\), the HOCBF sequence
\begin{equation}
\begin{aligned}
    &\psi_0(\planState,t)
    \define
    b(\config,t),
    \\
    &\psi_j(\planState,t)
    \define
    \dot{\psi}_{j-1}(\planState,t)
    +
    \alpha_j\bigl(\psi_{j-1}(\planState,t)\bigr),
    \quad
    j \in \rangen{\reldeg-1},
    \label{eq:hocbf_lifted_sequence_switching}
\end{aligned}
\end{equation}
where the functions \(\alpha_j\) are design parameters chosen as sufficiently smooth extended class-\(\classK\) functions. The corresponding lifted set in planning-state space is
\begin{equation}
    \mathcal{C}_{\mathrm{lift}}(t)
    \define
    \bigl\{
        \planState
        \mid
        \psi_j(\planState,t)\geq 0,
        \;
        j\in\{0,\ldots,\reldeg-1\}
    \bigr\}.
    \label{eq:hocbf_lifted_set_switching}
\end{equation}
While the set-expansion condition guarantees that \(\psi_0(\planState,t_s^+)\geq\psi_0(\planState,t_s^-)\) at a switching time \(t_s\), it does not generally imply the same property for the higher-order functions \(\psi_j\), \(j\geq 1\). Indeed, the first lifted barrier depends on derivative terms such as
\(
    \frac{\partial b(\config,t)}{\partial \config} \dot{\config}
    \)
    and
    \(
    \frac{\partial b(\config,t)}{\partial t},
\)
which may change discontinuously at a switch because of a change of active task, active predicate component, or funnel slope. Therefore, configuration-level expansion does not, in general, imply expansion of the HOCBF-lifted set. The same observation applies to backstepping CBF constructions.

Consequently, when HOCBFs or backstepping CBFs are used, the switching compatibility condition should be imposed or verified directly on the lifted set, namely
\begin{equation}
    \mathcal{C}_{\mathrm{lift}}(t_s^-)
    \subseteq
    \mathcal{C}_{\mathrm{lift}}(t_s^+),
    \qquad
    \forall t_s\in \mathcal{T}_s.
    \label{eq:lifted_switching_compatibility}
\end{equation}
This condition may be enforced conservatively through the choice of the funnel parameters and of the HOCBF/backstepping design. For polytopic predicates, a constructive approach similar to \cite[Sec.~V.C]{marchesini2025sampling} can be used to define the parameters of the barrier such that \eqref{eq:lifted_switching_compatibility} holds.

If \eqref{eq:lifted_switching_compatibility} is not guaranteed, the HOCBF constraints may still act as recovery conditions. In particular, if after a switch some lifted barrier value becomes negative, an inequality of the form 
\( \dot{\psi}_j \geq -\alpha_j(\psi_j) \) 
drives \(\psi_j\) back toward the nonnegative set whenever the corresponding QP constraint is feasible. In this case, however, forward invariance across the switching instant is not guaranteed, and the closed-loop STL satisfaction guarantee is temporarily lost. Thus, exact closed-loop preservation of the STL-induced set requires the lifted compatibility condition~\eqref{eq:lifted_switching_compatibility}, whereas recovery-based enforcement should be interpreted as a practical fallback mechanism.


%% file: Sections/exp_results.tex
\section{Experimental and simulation results}\label{sec:exp_res}

We validate the proposed framework through experiments on a planar free-flyer platform and simulations of a quadrotor in a three-dimensional environment. The objective is twofold: first, to show that the proposed planner generates collision-free trajectories satisfying nontrivial STL specifications with moderate computational effort; second, to show that the forward-invariance-based control layer preserves STL satisfaction during execution despite tracking errors induced by model mismatch and disturbances.

In both studies, we report the STL-induced satisfaction margin $\stlcbf{}_{\formulaT}(\config,t)$ associated with the selected branch \(\formulaT\).  Its nonnegativity certifies membership in the active STL-induced time-varying set and, by construction, satisfaction of the selected branch.

\subsection{Planar free-flyer experiment}\label{sec:exp_atmos}

The experimental platform is a planar free-flyer robot moving quasi-frictionlessly on a resin floor, emulating micro-gravity conditions; see Fig.~\ref{fig:atmos}. Its configuration is
\(
\config=(\pos,\yaw)\in\mathbb{R}^2\times(-\pi,\pi],
\)
where \(\pos\in\mathbb{R}^2\) is the planar position and \(\yaw\in(-\pi,\pi]\) is the yaw angle. The control input is the body-frame wrench
\[
\sysinput=
\begin{bmatrix}
\force^\top & \torquez
\end{bmatrix}^\top
\in\mathbb{R}^3,
\]
generated by eight thrusters through the allocation map
\(
\sysinput=\allocmat\thrustersforces
\),
with \(\allocmat\in\mathbb{R}^{3\times 8}\) full row rank. The full nonlinear dynamics are
\(
\ddconfig=\redG(\yaw)\sysinput
\),
where
\begin{equation}
    \redG(\yaw)=
    \begin{bmatrix}
        \frac{1}{m}\rotz(\yaw) & \zeros{2\times 1}
        \\
        \zeros{1\times 2} & \frac{1}{\inertiaz}
    \end{bmatrix},
\end{equation}
\(m\) is the mass, \(\inertiaz\) is the yaw moment of inertia, and \(\rotz(\yaw)\) is the planar rotation matrix. Defining
\(
\state=
\begin{bmatrix}
\config^\top & \dot{\config}^\top
\end{bmatrix}^\top
\),
the dynamics fit the control-affine form \eqref{eq:nlsys} and are feedback linearizable.

The robot is required to satisfy the STL specification
\begin{equation}
\begin{aligned}
    \formulaOR
    &=
    \formulaT_1 \vee \formulaT_2 \vee \formulaT_3 \vee \formulaT_4,
    \\
    \formulaT_1
    &=
    \eventuallyab{12}{26}\pred{h_1}
    \wedge
    \eventuallyab{34}{52}\pred{h_4}
    \wedge
    \alwaysab{62}{70}\pred{h_5}
    \wedge
    \eventuallyab{84}{112}\pred{h_3},
    \\
    \formulaT_2
    &=
    \eventuallyab{12}{26}\pred{h_2}
    \wedge
    \eventuallyab{32}{50}\pred{h_4}
    \wedge
    \eventuallyab{60}{82}\pred{h_5}
    \wedge
    \alwaysab{94}{102}\pred{h_3},
    \\
    \formulaT_3
    &=
    \eventuallyab{14}{30}\pred{h_4}
    \wedge
    \eventuallyab{38}{58}\pred{h_1}
    \wedge
    \alwaysab{68}{75}\pred{h_5}
    \wedge
    \eventuallyab{90}{116}\pred{h_3},
    \\
    \formulaT_4
    &=
    \eventuallyab{14}{30}\pred{h_2}
    \wedge
    \alwaysab{38}{47}\pred{h_1}
    \wedge
    \eventuallyab{58}{80}\pred{h_5}
    \wedge
    \eventuallyab{88}{116}\pred{h_3}.
\end{aligned}
\label{eq:stl_atmos}
\end{equation}
The predicates \(h_i\) are associated with the convex goal regions in the planar workspace shown in Fig.~\ref{fig:map_partioning}. Static obstacles are accounted for both at the planning level, through the collision-free decomposition, and at the control level, through the CBF-based safety filter.
The planner selects branch \(\formulaT_3\) of \eqref{eq:stl_atmos} and computes a collision-free spatio-temporal trajectory satisfying the corresponding sequence of tasks. The computational times of the main planning stages are reported in Table~\ref{tab:atmos_planning_times}. The total planning time is \(33.9426\,\mathrm{s}\), with the largest contribution arising from the construction of the product graph.

The planned trajectory is tracked online using the nominal tracking controller and safety-filter architecture introduced in Section~\ref{sec:lowlevelcontrol}. Figure~\ref{fig:map2d_and_traj} shows the collision-free decomposition and the trajectory executed by the robot. Despite tracking errors caused by model mismatch and actuation imperfections, the robot avoids the obstacles and completes the selected STL branch.
Figure~\ref{fig:atmos_rob} shows the STL-induced satisfaction margin
\(\stlcbf{}_{\formulaT_3}(\config,t)\). Its nonnegativity throughout the experiment confirms that the executed trajectory remains inside the active STL-induced time-varying set. The discontinuities in the plot correspond to switching instants at which a task is completed and the active-task set changes.

\begin{table}[t]
    \centering
    \caption{Planning times for the planar free-flyer experiment.}
    \label{tab:atmos_planning_times}
    \begin{tabular}{lc}
        \hline
        \textbf{Stage} & \textbf{Time [s]} \\
        \hline
        STL/task-graph construction & \(0.6244\) \\
        Product-graph construction & \(18.6375\) \\
        GCS optimization & \(14.6807\) \\
        \hline
        Total & \(33.9426\) \\
        \hline
    \end{tabular}
\end{table}

\begin{figure}
    \centering
    \includegraphics[width=0.4\linewidth]{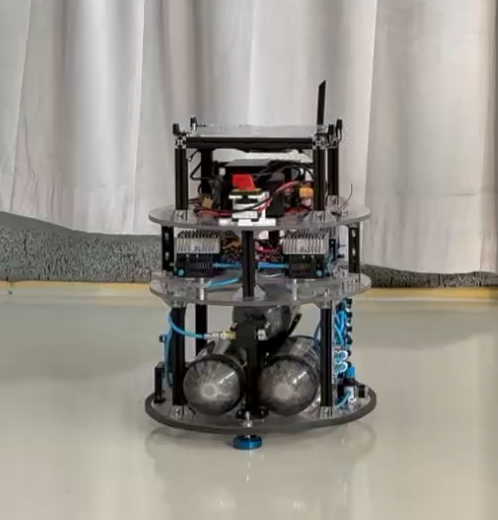}
    \caption{Planar free-flyer robot used in the experiments.}
    \label{fig:atmos}
\end{figure}

\begin{figure}[t]
    \centering
    \begin{subfigure}[t]{0.48\columnwidth}
        \centering
        \includegraphics[width=\textwidth]{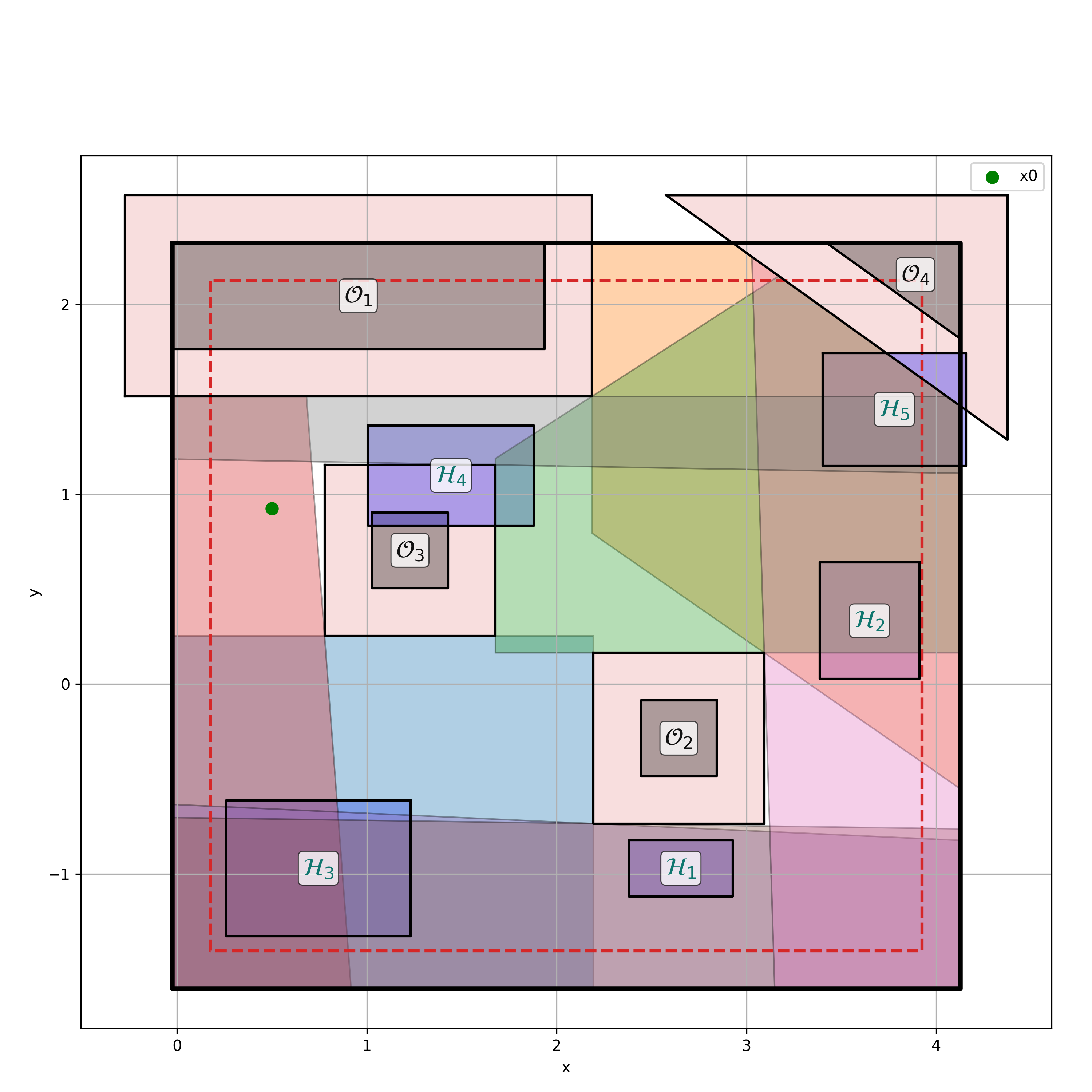}
        \caption{Workspace and free-space partition.}
        \label{fig:map_partioning}
    \end{subfigure}
    \hfill
    \begin{subfigure}[t]{0.48\columnwidth}
        \centering
        \includegraphics[width=\textwidth]{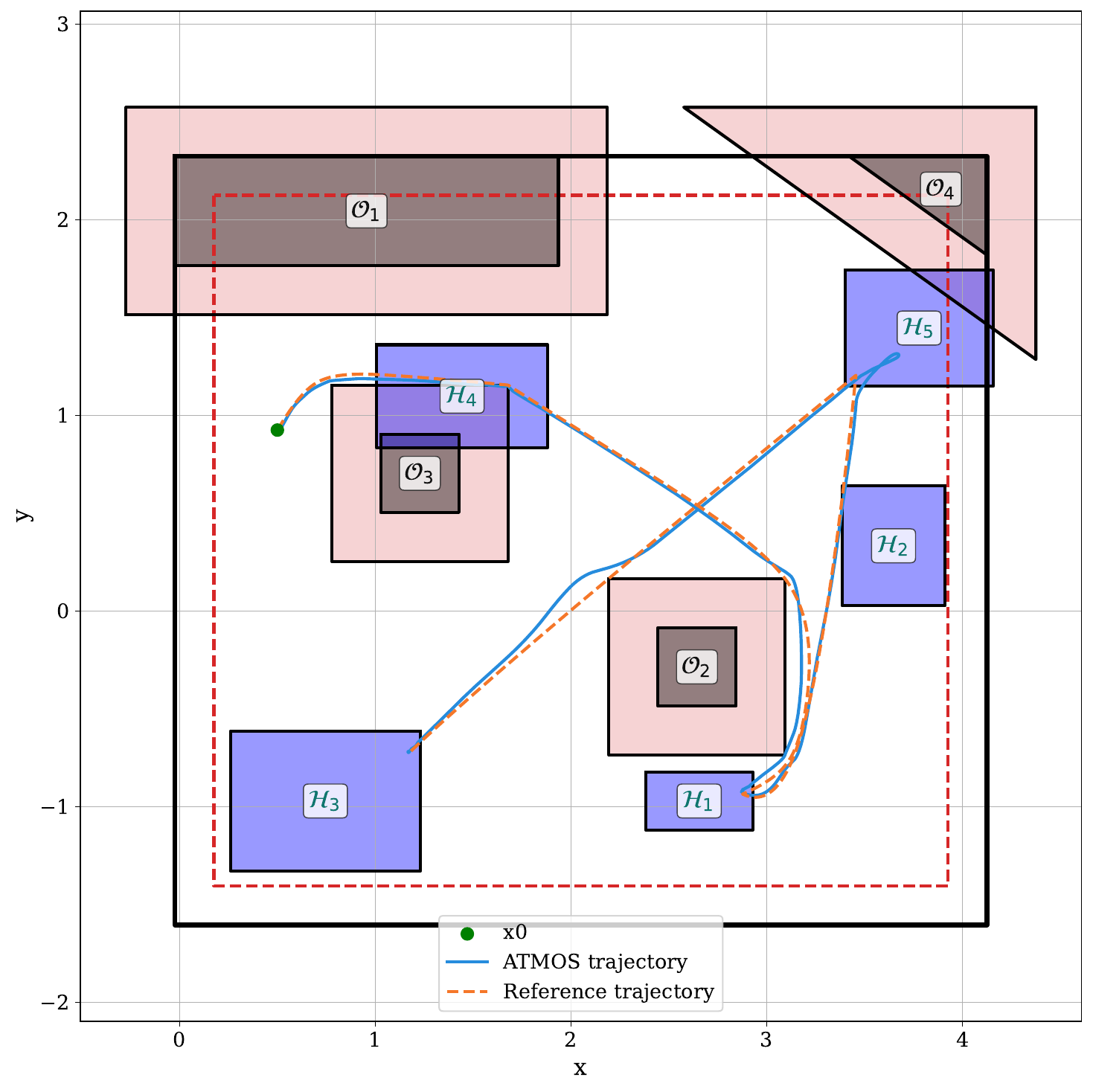}
        \caption{Executed trajectory.}
        \label{fig:atmos_traj}
    \end{subfigure}
    \caption{Planar free-flyer experiment. Left: laboratory workspace, with obstacles in gray, obstacle inflation in pink, workspace-boundary inflation in dashed red, and the collision-free convex partition obtained using IRIS~\cite{dai2024certified}. Right: trajectory executed by the robot.}
    \label{fig:map2d_and_traj}
\end{figure}

\begin{figure}
    \centering
    \includegraphics[width=0.9\linewidth]{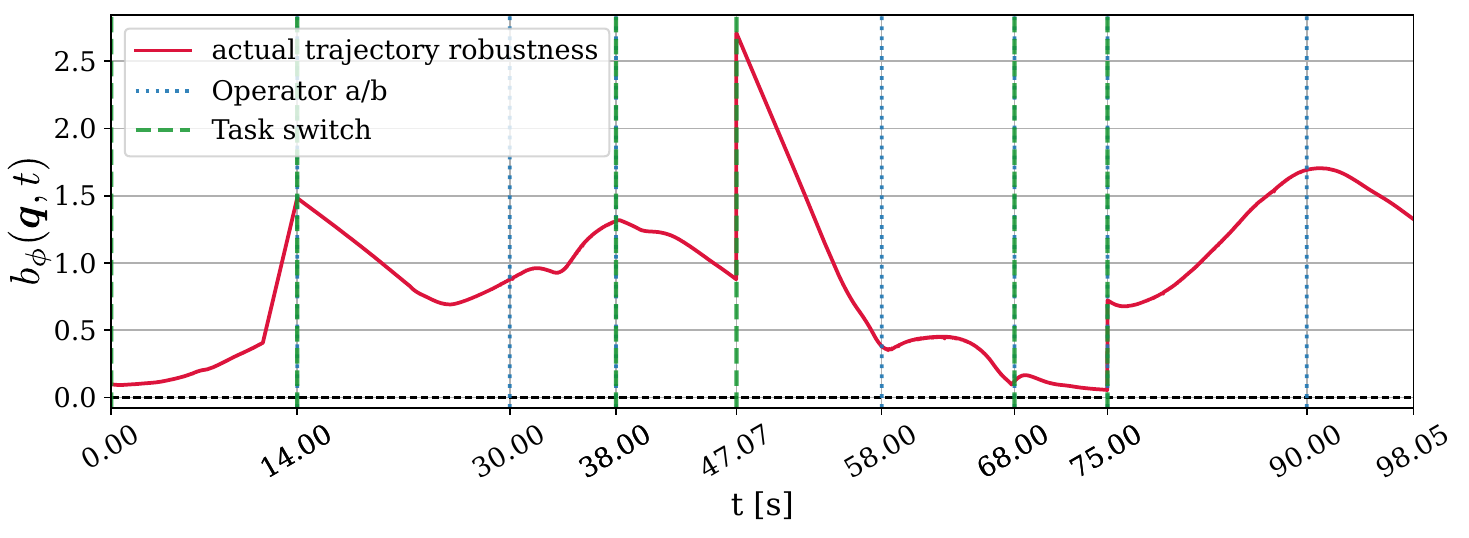}
    \caption{Temporal evolution of the STL-induced satisfaction margin
\(\stlcbf{}_{\formulaT_3}(\config,t)\) during the planar free-flyer experiment.}
    \label{fig:atmos_rob}
\end{figure}

\subsection{3D quadrotor simulation study}\label{sec:3dsim}

We next evaluate the proposed framework in a simulated three-dimensional quadrotor scenario. This study considers a higher-dimensional configuration space while retaining the same planning and control architecture used in the planar experiment.

The environment consists of the workspace
\(
    \mathcal{X}
    =
    [-10,10]\times[-10,10]\times[0,10],
\)
containing four box-shaped obstacles and six convex goal regions; see Fig.~\ref{fig:quad_3dtraj}. The obstacles are inflated by \(0.2\,\mathrm{m}\) to account for the dimensions of the quadrotor, and the resulting free space is partitioned into collision-free polytopic regions using IRIS~\cite{dai2024certified}.

We consider the STL specification
\begin{equation}
\begin{aligned}
    \formulaOR
    &=
    \formulaT_1 \vee \formulaT_2 \vee \formulaT_3 \vee \formulaT_4,
    \\
    \formulaT_1
    &=
    \eventuallyab{8}{22}\pred{h_1}
    \wedge
    \eventuallyab{24}{42}\pred{h_6}
    \wedge
    \eventuallyab{48}{68}\pred{h_3}
    \wedge
    \eventuallyab{75}{98}\pred{h_1}
    \\
    &\wedge
    \alwaysab{108}{128}\pred{h_2},
    \\
    \formulaT_2
    &=
    \eventuallyab{8}{24}\pred{h_6}
    \wedge
    \eventuallyab{34}{58}\pred{h_1}
    \wedge
    \alwaysab{68}{90}\pred{h_2}
    \wedge
    \eventuallyab{100}{122}\pred{h_3},
    \\
    \formulaT_3
    &=
    \eventuallyab{8}{22}\pred{h_1}
    \wedge
    \eventuallyab{32}{58}\pred{h_4}
    \wedge
    \alwaysab{70}{92}\pred{h_3}
    \wedge
    \eventuallyab{104}{130}\pred{h_2},
    \\
    \formulaT_4
    &=
    \eventuallyab{8}{24}\pred{h_6}
    \wedge
    \eventuallyab{30}{50}\pred{h_5}
    \wedge
    \alwaysab{60}{82}\pred{h_3}
    \wedge
    \eventuallyab{92}{116}\pred{h_1}
    \\
    &\wedge
    \eventuallyab{126}{150}\pred{h_2}.
\end{aligned}
\label{eq:quad_3d_stl}
\end{equation}

The predicates \(h_i\) are associated with the convex goal regions shown in Fig.~\ref{fig:quad_3dtraj}. As in the planar experiment, obstacle avoidance is incorporated at the planning level through the collision-free decomposition and at the control level through the CBF-based safety filter.

The planner uses the robustness margin \(\robustMargin=0.1\). The GCS problem is solved with MOSEK using the convex relaxation and randomized rounding over at most \(12\) candidate paths. The planner selects branch \(\formulaT_2\) of \eqref{eq:quad_3d_stl}.

The computational times of the main planning stages are reported in Table~\ref{tab:quad_planning_times}. The total planning time is \(58.4177\,\mathrm{s}\), with the largest contribution arising from the GCS optimization. Compared with the planar experiment, the increased computational cost reflects the higher-dimensional spatio-temporal search space.

The planned trajectory is tracked using the full nonlinear quadrotor dynamics together with the nominal tracking controller and safety-filter architecture introduced in Section~\ref{sec:lowlevelcontrol}. To evaluate the closed-loop behavior under uncertainty, the simulated plant includes parametric mismatch and additive input disturbances applied before actuator saturation. Specifically, for the commanded input
\(
    \sysinput
    =
    [T,\tau_x,\tau_y,\tau_z]^\top,
\)
the disturbed input is modeled as
\[
    \sysinput_d
    =
    \sysinput+w_u,
    \qquad
    w_u
    \sim
    \mathcal{N}\!\left(
        0,
        \operatorname{diag}\!\left(
            \sigma_T^2,
            \sigma_{\tau_x}^2,
            \sigma_{\tau_y}^2,
            \sigma_{\tau_z}^2
        \right)
    \right),
\]
where
\[
    \sigma_T=0.10\,\mathrm{N},
    \qquad
    (\sigma_{\tau_x},\sigma_{\tau_y},\sigma_{\tau_z})
    =
    (0.01,\,0.01,\,0.008)\,\mathrm{N\,m}.
\]

Figure~\ref{fig:quad_3dtraj} shows the executed trajectory. Despite the imperfect tracking due to model mismatch and input disturbances, the quadrotor avoids the obstacles, and completes the selected branch \(\formulaT_2\).

Figure~\ref{fig:quad_rob} shows the corresponding STL-induced satisfaction margin
\(\stlcbf{}_{\formulaT_2}(\config,t)\). As in the planar experiment, its nonnegativity throughout the simulation confirms that the executed trajectory remains inside the active STL-induced time-varying set and therefore satisfies the selected branch.

\begin{table}[t]
    \centering
    \caption{Planning times for the 3D quadrotor simulation.}
    \label{tab:quad_planning_times}
    \begin{tabular}{lc}
        \hline
        \textbf{Stage} & \textbf{Time [s]} \\
        \hline
        STL/task-graph construction & \(0.4364\) \\
        Product-graph construction & \(13.3572\) \\
        GCS optimization & \(44.6241\) \\
        \hline
        Total & \(58,4177\) \\
        \hline
    \end{tabular}
\end{table}

\begin{figure}
    \centering
    \includegraphics[width=0.80\linewidth]{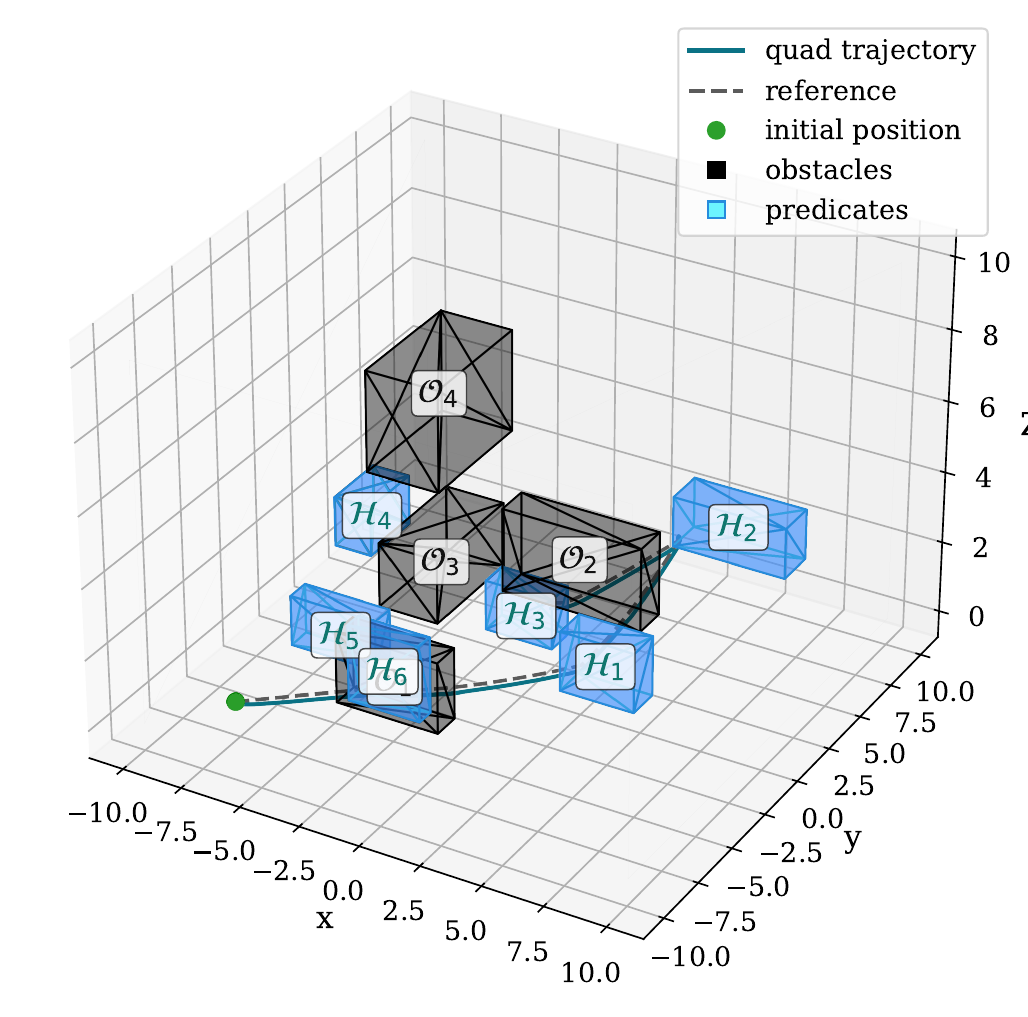}
    \caption{Executed 3D quadrotor trajectory satisfying the selected STL branch \(\formulaT_2\). The trajectory visits the prescribed goal regions while avoiding the obstacles.}
    \label{fig:quad_3dtraj}
\end{figure}

\begin{figure}
    \centering
    \includegraphics[width=1.0\linewidth]{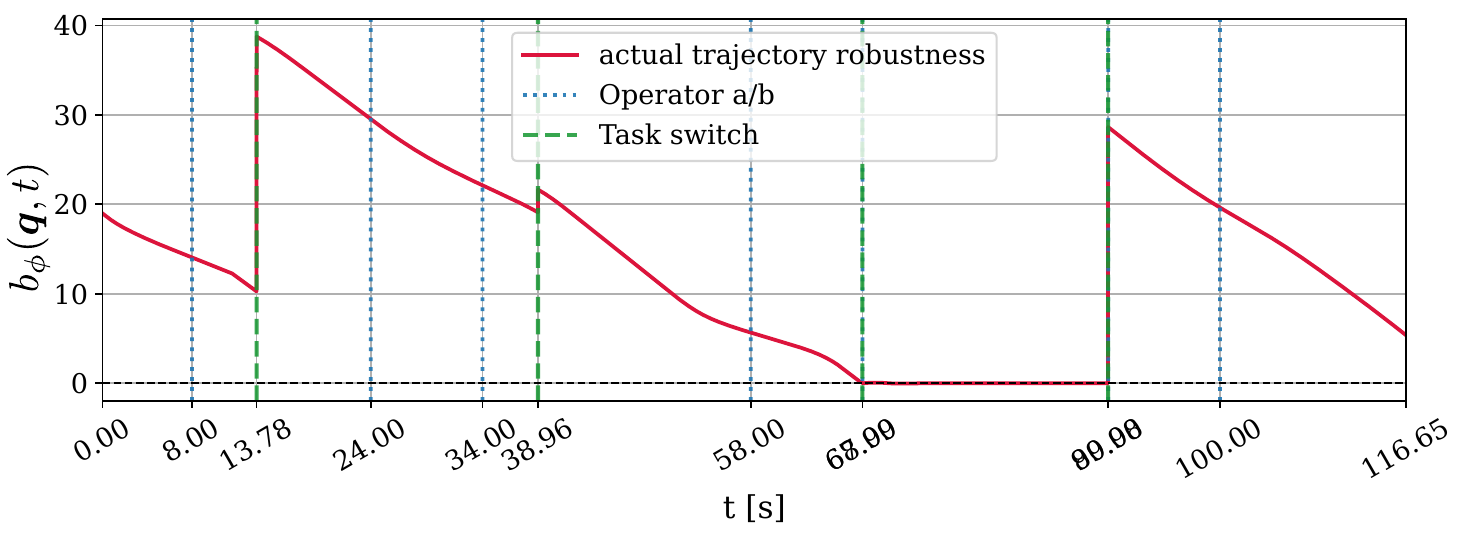}
    \caption{Temporal evolution of
    $\stlcbf{}_{\formulaT_2}(\config, t)$ during the 3D quadrotor simulation. The nonnegative robustness confirms satisfaction of the selected STL branch \(\formulaT_2\) along the executed trajectory.}
    \label{fig:quad_rob}
\end{figure}

%% file: Sections/conclusions.tex
\section{Conclusions}\label{sec:conclusion}

We presented a unified planning and control framework for satisfying a fragment of Signal Temporal Logic specifications in obstacle-rich environments. The proposed method encodes STL tasks through time-varying and spatio-temporal sets, organizes their convex execution phases into a graph structure, and combines this representation with a collision-free decomposition through a product graph. This yields a Graph of Convex Sets formulation of the planning problem, in which B-spline trajectory segments are optimized over convex spatio-temporal regions.

At the control level, the same STL-induced admissible sets are enforced through a CBF-based feedback architecture, providing a direct connection between high-level task planning and low-level forward-invariance-based execution.

The approach was validated in real-world experiments on a free-flyer robot and in a 3D quadrotor simulation study. The results demonstrate the ability of the framework to generate collision-free trajectories satisfying nontrivial STL specifications and to preserve task satisfaction during execution.

%% file: Sections/app_selectingkappa.tex
\subsection{Computing $\bar{\lintermtv}_{\taskID}$ by containment LPs}\label{app:computingkappa}

Consider two consecutive tasks $\taskID$ and $\taskID+1$, with associated 
set 
$\tvsetspatiotemp_{\taskID}$ and funnel set $\funnelset{\taskID+1}$.
We assume both sets are described in $\mathcal{H}$-representation, consistently with the notation
introduced in Section~\ref{subsec:stl_frag}. In particular,
{
\begin{align}
%
\tvset{}_{\taskID}(t) &= 
\left\{
\config \in \configSet
\;\middle|\;
\begin{aligned}
&\bigl[\; \cC_{\taskID} \;\; \tfrac{\bar{\lintermtv}_{\taskID}}{\alpha_{\taskID}}\,\ones{\numconsh{\taskID}} \;\bigr]
\begin{bmatrix} \config \\ t \end{bmatrix} 
\\
&\leq \dd_{\taskID} + (\bar{\lintermtv}_{\taskID} - \robustMargin)\,\ones{\numconsh{\taskID}}, 
\end{aligned} \label{eq:funnelpoly}
\right\}.
\\[6pt]
%
\funnelsetslice{\taskID+1}(t) &= 
\left\{
\config \in \configSet
\;\middle|\;
\begin{aligned}
&\bigl[\; \cC_{\taskID+1} \;\; \tfrac{\bar{\lintermtv}_{\taskID+1}}{\alpha_{\taskID+1}}\,\ones{\numconsh{\taskID+1}} \;\bigr]
\begin{bmatrix} \config \\ t \end{bmatrix} 
\\
&\leq \dd_{\taskID+1} + (\bar{\lintermtv}_{\taskID+1} - \robustMargin)\,\ones{\numconsh{\taskID+1}}, 
\end{aligned} \label{eq:funnelpoly}
\right\}.
\end{align}
}
Our objective is to determine the smallest slope parameter $\bar{\lintermtv}_{\taskID+1} \ge 0$ such that the 
set $\tvset{}_{\taskID}(t)$ 
is contained in the funnel set of task $\taskID+1$ at time 
$t=\tinittask{\taskID+1}$, as required by the
containment condition~\eqref{eq:expansion_condition}.

\paragraph*{\underline{Containment via LPs}}  
Recall that 
$\tvset{}_{\taskID}(\tinittask{\taskID+1}) \subseteq \funnelsetslice{\taskID}(\tinittask{\taskID+1})$
if and only if every inequality defining the outer polytope $\funnelsetslice{\taskID+1}
(\tinittask{\taskID+1})$ is satisfied over the inner polytope 
$\tvset{}_{\taskID}(\tinittask{\taskID+1})$.

Let $\cc_{(\taskID+1),i}$ denote the i-th row of the constraint matrix $\cC_{\taskID+1}$. After fixing 
$t=\tinittask{\taskID+1}$, the $i$-th right-hand side of the funnel inequality becomes:
\begin{equation}
    \bar{d}_{\taskID+1,i}(\bar{\lintermtv}_{\taskID+1}) 
    = d_{\taskID+1,i} + \bar{\lintermtv}_{\taskID+1}\!\left(1 - \tfrac{
    \tinittask{\taskID+1}}{\alpha_{\taskID+1}} \right) - \robustMargin.
\end{equation}
Containment therefore requires that, for each constraint $i$,
\begin{equation}
    \sup_{\config \in 
    \tvset{}_{\taskID}(\tinittask{\taskID+1})} 
    \cc_{(\taskID+1),i}^\top \config
    \;\leq\;
    \bar{d}_{\taskID+1,i}(\bar{\lintermtv}_{\taskID+1}).
    \label{eq:polytope_containment}
\end{equation}
Since 
$\tvset{}_{\taskID}(\tinittask{\taskID+1})$ is a polytope, each supremum in \eqref{eq:polytope_containment} can be efficiently computed by solving a linear program.
Specifically, for each row $i$ of $\cC_{\taskID+1}$ we define
\begin{equation}
\begin{aligned}
    \nu_i \define 
    \max_{\config} \quad & \cc_{(\taskID+1),i}^\top \config \\
    \text{s.t.} \quad & \cC_{\taskID} \config 
    + \tfrac{\bar{\lintermtv}_{\taskID}}{\alpha_{\taskID}}\,\ones{\numconsh{\taskID}}\tinittask{\taskID+1}  \leq \dd_{\taskID} 
    + (\bar{\lintermtv}_{\taskID} - \robustMargin)\,\ones{\numconsh{\taskID}}.
\end{aligned}
\label{eq:lp_support}
\end{equation}

Containment then reduces to the set of scalar inequalities
\begin{equation}
    \nu_i
    \;\leq\; d_{\taskID+1,i} + \bar{\lintermtv}_{\taskID+1}\!\left(1 - \tfrac{
    \tinittask{\taskID+1}}{\alpha_{\taskID+1}} \right) - \robustMargin,
    \quad \forall i.
    \label{eq:containment_constraints}
\end{equation}

\paragraph*{\underline{Final optimization}}  
The minimal feasible $\bar{\lintermtv}_{\taskID+1}$ is obtained by solving the single-variable linear program
\begin{equation}
\begin{aligned}
    \min_{\bar{\lintermtv}_{\taskID+1}} \quad & \bar{\lintermtv}_{\taskID+1} \\
    \text{s.t.} \quad & \nu_i 
    \leq d_{\taskID+1,i} + \bar{\lintermtv}_{\taskID+1}\!\left(1 - \tfrac{
    \tinittask{\taskID+1}}{\alpha_{\taskID+1}} \right) - \robustMargin,
    \quad \forall i,
\end{aligned}
\label{eq:containment_gamma}
\end{equation}

Since the term $\left(1 - \tfrac{
\tinittask{\taskID+1}}{\alpha_{\taskID+1}} \right)$ is constant and strictly positive (as $
\tinittask{\taskID+1}< \alpha_{\taskID+1}$), we can solve explicitly the inequality for $\bar{\lintermtv}_{\taskID+1}$:
$$\bar{\lintermtv}_{\taskID+1} \geq \frac{\nu_i + \robustMargin - d_{\taskID+1,i}}{1 - \tfrac{
\tinittask{\taskID+1}}{\alpha_{\taskID+1}}}, \quad \forall i.$$

The minimal admissible slope is therefore
\begin{equation}
    \bar{\lintermtv}_{\taskID+1} 
    = \max\left( 0, \quad \max_i \frac{\nu_i + \robustMargin - d_{\taskID+1,i}}{1 - \tfrac{
    \tinittask{\taskID+1}}{\alpha_{\taskID+1}}} \right).
\end{equation}
The non-negativity constraint accounts for the case in which $
\tvset{}_{\taskID}(\tinittask{\taskID+1})$ is already strictly contained in $\funnelsetslice{\taskID+1}(
\tinittask{\taskID+1})$, in which case a negative value would correspond to an expansion of the funnel rather than a contraction.

\begin{remark}  
The derivation above relies on \(\mathcal{H}\)-representations of polytopes. For more general convex sets, the containment condition
\(
\tvset{}_{\taskID}(\tinittask{\taskID+1}) \subseteq \funnelsetslice{\taskID+1}(
\tinittask{\taskID+1})\)
can be expressed in terms of support functions:
\[
\sigma_{
\tvset{}_{\taskID}(\tinittask{\taskID+1})}(\boldsymbol{c}) \le \sigma_{\funnelsetslice{\taskID+1}(
\tinittask{\taskID+1})}(\boldsymbol{c}), \quad \forall \boldsymbol{c} \in \mathbb{R}^n,
\]
where \(\sigma_\mathcal{C}(\boldsymbol{c}) = \sup_{x \in \mathcal{C}} \boldsymbol{c}^\top x\) denotes the support function of set $\mathcal{C}$ \cite[Section~3.2.3]{boyd2004convex}.   

For certain special convex sets, this can be written explicitly, e.g.:  
\begin{itemize}
    \item \textbf{Sphere:} If $\mathcal{C} = \{ \config \mid \|\config-\config_0\|_2 \le r \}$, then
    $\sigma_\mathcal{C}(\boldsymbol{c}) = \boldsymbol{c}^\top \config_0 + r \|\boldsymbol{c}\|_2$.
    \item \textbf{Ellipsoid:} If $\mathcal{C} = \{ \config \mid (\config-\config_0)^\top \boldsymbol{Q}^{-1} (\config-\config_0) \le 1 \}$, then
    $\sigma_\mathcal{C}(\boldsymbol{c}) = \boldsymbol{c}^\top \config_0 + \sqrt{\boldsymbol{c}^\top \boldsymbol{Q} \boldsymbol{c}}$.
\end{itemize}

In the general case, one would need to check the containment for all directions \(\boldsymbol{c}\), which is infinite-dimensional and therefore computationally intractable. In practice, a conservative approach is to replace general convex sets with polytopic approximations: an \emph{outer polytope} for \(\robpredset_{\taskID}\) and an \emph{inner polytope} for \(\funnelsetslice{\taskID+1}(
\tinittask{\taskID+1})\), which reduces the problem to the polytopic LP formulation discussed above.  
\end{remark}
